\documentclass[a4paper]{article}
\usepackage{amsthm}
\newtheorem{theorem}{Theorem}[section]

\newcommand{\xauthor}[3]{\begin{tabular}{c}#1\\#2\\\texttt{#3}\end{tabular}}

\newif\ifREVEDIT%
\newif\ifExamples%

\usepackage{multirow}



\usepackage{times}
\usepackage{url}
\usepackage[hidelinks]{hyperref}
\usepackage{graphicx}
\usepackage{booktabs}
\usepackage{algorithm}
\usepackage{algorithmic}
\urlstyle{same}
\usepackage[
    prependcaption]{todonotes}
%


\usepackage{tikz}
\tikzset{
    boxed/.style={draw, rectangle, rounded corners=4pt, fill=gray!10,inner sep=1em, anchor=center},
    event/.style={},
    smodel/.style={fill=gray!25},
    tchoice/.style={draw, circle},
    indep/.style={},
    proptc/.style = {-latex, dashed},
    propsm/.style = {-latex, thick},
    doubt/.style = {gray} }
\usetikzlibrary{calc, positioning, patterns, perspective}
\usepackage{hyperref}
\hypersetup{
    colorlinks=true,
    linkcolor=blue,
    citecolor=blue,
    urlcolor=blue, }
\usepackage{commath}

\newtheorem{assumption}{Assumption}
\usepackage{amssymb}
\usepackage[normalem]{ulem}
\usepackage[nice]{nicefrac}
\usepackage{stmaryrd}
\usepackage[smaller]{acronym}
\usepackage{multicol}
\usepackage{multirow}
\usepackage{cleveref}
\crefname{example}{ex.}{exs.}
\crefname{proposition}{prop.}{props.}
\crefname{assumption}{assumption}{assumptions}
\usepackage{hyphenat}
\hyphenation{
mo-dels mi-ti-ga-ted mi-ni-mal ex-am-ple spe-ci-fi-ca-tion
}
\newtheorem{example}{Example}
\newtheorem{definition}{Definition}
\newtheorem{proposition}{Proposition}
\newtheorem{corollary}{Corollary}


\def\tight{%
  \itemsep 0pt plus 1pt
  \parskip 0pt plus 1pt}

\newcommand{\ie}{\emph{i.e.}}
\newcommand{\eg}{\emph{e.g.}}
\newcommand{\eat}[1]{}
\newcommand{\at}[1]{\ensuremath{\!\del{#1}}}        
%
%
%
\newcommand{\cla}[1]{\ensuremath{{\mathcal{#1}}}}        

%
%
\newcommand{\conj}{\ensuremath{\wedge}} \newcommand{\disj}{\ensuremath{\vee}}
\newcommand{\clause}{\ensuremath{\leftarrow}}
\DeclareMathOperator{\naf}{\sim\!}
\newcommand{\co}[1]{\ensuremath{\overline{#1}}}     
%
%
\newcommand{\ATOMSset}{\ensuremath{\cla{A}}}

\newcommand{\LITERALSset}{\ensuremath{\cla{L}}}

\newcommand{\WATOMset}{\ensuremath{\ATOMSset_{\cla{W}}}}
\newcommand{\FACTSset}{\ensuremath{\cla{F}}}

\newcommand{\WEIGHTFset}{\ensuremath{\cla{W}}}
\newcommand{\RULESset}{\ensuremath{\cla{R}}}
\newcommand{\TCHOICEset}{\ensuremath{\cla{T}}}
\newcommand{\MODELset}{\ensuremath{\cla{S}}} 
\newcommand{\EVENTSset}{\ensuremath{\cla{E}}}
\newcommand{\CONSISTset}{\ensuremath{\cla{C}}}
\newcommand{\SBF}{\ensuremath{P_{\sbf}}}
%
%

%
\newcommand{\prfunc}{\ensuremath{\mathrm{P}}}       
\newcommand{\pr}[1]{\ensuremath{\prfunc\at{#1}}}    
\newcommand{\prd}[1]{\ensuremath{\prfunc_{#1}}}     
\newcommand{\prT}{\prd{\TCHOICEset}}

\newcommand{\prE}{\prd{\EVENTSset}}

%
\newcommand{\wgtfunc}{\ensuremath{\omega}}       
    
\newcommand{\wgtd}[1]{\ensuremath{\wgtfunc_{#1}}}     
\newcommand{\wgtT}{\wgtd{\TCHOICEset}}
\newcommand{\wgtM}{\wgtd{\MODELset}}
\newcommand{\wgtE}{\wgtd{\EVENTSset}}
\newcommand{\wgtC}{\wgtd{\cla{R}}}
\newcommand{\wgte}[1]{\ensuremath{\wgtE\at{#1}}}
\newcommand{\wgtm}[1]{\ensuremath{\wgtM\at{#1}}}
\newcommand{\wgtt}[1]{\ensuremath{\wgtT\at{#1}}}
\newcommand{\wgtc}[1]{\ensuremath{\wgtC\at{#1}}}
%
%
           
\newcommand{\pwT}{\ensuremath{\mu_{\TCHOICEset}}}   
\newcommand{\pwt}[1]{\ensuremath{\pwT\at{#1}}}     
\newcommand{\pwM}{\ensuremath{\mu_{\MODELset}}}   
     
   
     
\newcommand{\pwE}{\ensuremath{\mu_{\EVENTSset}}}   
     
%
%
%
\newcommand{\stablecore}[1]{\ensuremath{\left\llbracket #1 \right\rrbracket}}
\newcommand{\inconsistent}{\bot}
\newcommand{\given}{\ensuremath{~\middle|~}}
\newcommand{\consequenceclass}{\ensuremath{\Lambda}}
\newcommand{\indepclass}{\ensuremath{\Diamond}}

\newcommand{\weightfact}[2]{\ensuremath{#1:#2}}
\newcommand{\weightrule}[3]{\weightfact{#1}{#2} \leftarrow #3}
\newcommand{\class}[1]{\ensuremath{[{#1}]_{\sim}}}
\newcommand{\tcgen}[1]{\MODELset\at{#1}}

\newcommand{\lpmln}{\texttt{LP\textsuperscript{MLN}}}
\newcommand{\emptyevent}{\ensuremath{\lambda}}
\newcommand{\powerset}[1]{\ensuremath{\mathrm{2}^{#1}}}
%
%
\acrodef{BK}[BK]{background knowledge}
\acrodef{ASP}[ASP]{answer set program}
\acrodef{NP}[NP]{normal program}
\acrodef{DP}[DP]{disjunctive program}
\acrodef{LP}[LP]{logic program}
\acrodef{PCR}[PCR]{program with choice rules}
\acrodef{DS}[DS]{distribution semantics}
\acrodef{PF}[PF]{probabilistic fact}
\acrodef{WF}[WF]{weighted fact}
\acrodef{TC}[TC]{total choice}
\acrodef{SM}[SM]{stable model}
\acrodef{SC}[SC]{stable core}
\acrodef{KL}[KL]{Kullback-Leibler}
\acrodef{SBF}[SBF]{simple but fruitful}
\acrodef{RSL}[RSL]{random set of literals}
\acrodef{RCE}[RCE]{random consistent event}
\acrodef{SASP}[SASP]{stochastic answer set program}
\acrodef{WASP}[WASP]{weighted answer set program}
\acrodef{HMM}[HMM]{hidden Markov model}
\acrodef{MAP}[MAP]{maximum a posteriori}
\acrodef{MLE}[MLE]{maximum likelihood estimation}
\acrodef{BI}[BI]{Bayesian inference}
\acrodef{PLP}[PLP]{probabilistic logic programming}
%
%
\newcommand{\sbf}{\ensuremath{\mathrm{1}}}
%
%
\newcounter{revcounter}
\ifREVEDIT
    
    \newcommand{\delete}[1]{\sout{#1}}
    \newcommand{\sidenote}[1]{\stepcounter{revcounter}{\color{red!50!black}\(\vert^{\arabic{revcounter}}\)}\marginpar{{\color{red!50!black}\(^{\arabic{revcounter}}\vert\)}\scriptsize #1}}
    \newcommand{\defnote}[1]{\marginpar{\scriptsize{{\color{blue!50!black}\bf def.~}#1}}}
    \newcommand{\topicnote}[1]{{\scriptsize\color{red!50!black}$\blacktriangleright$}\marginpar{\scriptsize{\color{red!50!black}\it #1}}}
    \newcommand{\replace}[2]{\delete{#1}\sidenote{#2}}
    \newcommand{\franc}[1]{{\color{green!30!black}#1}}

    \newcommand{\dietmar}[1]{{\color{brown!40!black}#1}}


    \newcommand{\selfnote}[1]{\todo[backgroundcolor=green!20]{{\footnotesize #1}}}
    \newcommand{\spanote}[1]{{\todo[size=footnotesize,color=teal!20]{\textbf{SPA:} #1}}}
    \newcommand{\dsnote}[1]{{\todo[size=footnotesize,color=teal!20]{\textbf{DS:} #1}}}
    \newcommand{\francnote}[1]{{\todo[size=footnotesize,color=green!30]{\textbf{FC:} #1}}}
    \newcommand{\bdnote}[1]{{\todo[size=footnotesize,color=red!60]{\textbf{BD:} #1}}}
\else
    
    \newcommand{\delete}[1]{}
    \newcommand{\sidenote}[1]{}
    \newcommand{\defnote}[1]{}
    \newcommand{\topicnote}[1]{}
    \newcommand{\replace}[2]{}
    \newcommand{\franc}[1]{}

    \newcommand{\dietmar}[1]{}


    \newcommand{\selfnote}[1]{}
    \newcommand{\spanote}[1]{}
    \newcommand{\dsnote}[1]{}
    \newcommand{\francnote}[1]{}
    \newcommand{\bdnote}[1]{}
\fi

\begin{document}


\title{%
	An Algebraic Approach to Weighted
	Answer~Set~Programming
}
\author{%
	\xauthor{Francisco Coelho}{NOVA-LINCS, University of \'Evora}{fc@uevora.pt}   \and %
	\xauthor{Bruno Dinis}{CIMA, University of \'Evora}{bruno.dinis@uevora.pt}        \and %
	\xauthor{Dietmar Seipel}{Universit\"at W\"urzburg}{dietmar.seipel@uni-wuerzburg.de}      \and %
	\xauthor{Salvador Abreu}{NOVA-LINCS, University of \'Evora}{spa@uevora.pt}     %
}

\maketitle

\begin{abstract}
	\Aclp{LP}, more specifically, \aclp*{ASP}, can be annotated with probabilities on facts to express uncertainty.
	We address the 	problem of propagating weight annotations on facts (\eg\ probabilities) of an \acl*{ASP} to its \aclp*{SM}, and from there to events (defined as sets of atoms) in a dataset over the program's domain. %

	We propose a novel approach which is algebraic in the sense that it relies on an equivalence relation over the set of events.
	Uncertainty is then described as polynomial expressions over variables.
	We propagate the weight function in the space of models and events, rather than doing so within the syntax of the program. As evidence that our approach is sound, we show that certain facts behave as expected.
	Our approach allows us to investigate weight annotated programs and to determine how suitable a given one is for modeling a given dataset containing events.
\end{abstract}


\section{Introduction}
\label{sec:introduction}

\topicnote{logic+uncertainty}\noindent Using \iac{LP} to model and reason over a real world scenario is often difficult because of uncertainty underlying the problem being worked on.
Classic \acp{LP} represent knowledge in precise and complete terms, which turns out to be problematic when the scenario is characterized by stochastic or observability factors.\sidenote{insert simple, nice example}
%
%
We aim to explore how \acp{ASP} plus weight annotated facts can lead to useful characterizations for this class of problems.

To setup a working framework, we make the following assumption:
\begin{assumption}[System Representation, Data and Propagation]\label{asp:system.representation.data}

	\phantom{~}
	Consider a \emph{system} whose states are \emph{partially observable} (\ie, observations can miss some state information) or \emph{stochastic} (\ie\ observed values are affected by random noise).
	We assume that knowledge about such system features a formal specification including \emph{weighted facts} and empirical \emph{data} such that:
	\begin{description}\tight
		\item[Representation.]
		\topicnote{system = representation, \acs*{SM}=state}
		The system has a formal \emph{representation}\footnote{We use `representation' instead of `model' to avoid confusion with the \emph{stable models} of \aclp{ASP}.} in the form of a certain \acl{LP};
		The program's \aclp*{SM} correspond one-to-one with the system states.
		\item[Data.] \emph{Data} is a set of observations; a single \emph{observation} (of the system states) results from a set of (boolean) \emph{sensors}.
		\item[Propagation.]
		\topicnote{weight propagation}
		The \emph{weights} in \emph{facts} are \emph{propagated} to the \aclp{SM} of the representation.
	\end{description}
\end{assumption}

\topicnote{parameter estimation}
In this setting, data from observations can be used to estimate some parameteres used in the propagation process and, more importantly, to address the question of `\emph{How accurate is the representation of the system?}'.%

\topicnote{other \acs*{PLP}}%
Other \ac{PLP} systems such as \texttt{Problog} \cite{de2007problog}, \texttt{P-log} \cite{baral2009probabilistic} or \lpmln \cite{lee2016weighted}, in line with \cite{kifer1992theory}, derive a probability distribution of the \aclp{SM} from the \textit{syntax} of an annotated logic program.  The more expressive of these systems, according to \cite{lee2017lpmln}, is \lpmln, that can embed the \ac{MAP} estimation of the other systems.
\topicnote{probability from syntax}%
Since in these \ac{PLP} systems the probability distribution results from the program's syntax, these systems are limited to handle \emph{a priori} information and are unable to support \emph{posterior} data. Furthermore, these distributions are limited to the \acp{SM}, while we \emph{extend their domain to any event} (\ie\ any set of atoms). 

However, the key feature that we aim to address is concerned with the inherent uncertainty of the \acp{SM} in \aclp{LP} such as $\weightfact{a}{0.3}, b \vee c \clause a$. Intuitively, this program entails three \acp{SM}: $ac$, $ab$ and $\neg a$. We can, for example, assign the probability of $\neg a$ as $0.7$ but what about $ab$ and $ac$? Systems like \lpmln\ assign the same probability to these \acp{SM} \cite{lee2017lpmln,cozman2020joy}.

We question the underlying assumption of such assignments and propose a method where the distribution that results from the representation includes parameters (\ie\ variables in the algebraic sense: symbols for unknown quantities) that express the lack of \emph{prior} information concerning cases as above. The values of these parameters can be estimated \emph{a posteriori}, in the presence of data.

To frame this setting we extend our assumptions:
\begin{assumption}[Sensor Representation and Events]\label{asp:sensor.representation.events}
	\begin{description}\tight
		\item[Sensors.] \topicnote{sensor = atom}The \emph{sensors} of the system's states are associated to some atoms in the representation; 
		\item[Events.] \topicnote{event = atoms}An \emph{event} is a set of atoms from the representation.\sidenote{reconsider: is this a definition?}
	\end{description}
\end{assumption}

\topicnote{strong + week negation}More specifically, following the terminology set in \cite{calimeri2020aspcore}, a sensor $\sigma_a$ can `activate' the associated (\emph{classical}, \emph{strong}) \emph{atom} $a$ whereas no activation is represented by the (\emph{default}, \emph{weak} or \emph{negation-as-failure (naf)}) \emph{literal} $\naf a$. The same applies to a sensor $\sigma_{\neg a}$ associated to the (classically) negated atom $\neg a$.

\topicnote{hidden + stochastic}\Cref{asp:sensor.representation.events} enables a straightforward representation of \emph{hidden} parts of a system as well as faulty (\emph{stochastic}) sensors. For example, in the event
\begin{equation}
	\left\{ a, \neg a, \naf b, \naf \neg b, \naf c, \neg c \right\}\label{eq:example.event.long}	
\end{equation}
both $a$ and $\neg a$ are activated (suggesting \eg\ a problem in the associated sensors), $b$ is not observed (\ie\ hidden), and $\naf c, \neg c$ reports the (consistent) activation of $\neg c$ and no activation of $c$.
\topicnote{event $>$ observation} 
While every observation is an event, since some events can contain atoms not associated to sensors, not all events are observations. Furthermore, some events coincide with a \acl{SM} but others don't; an event may not uniquely determine a state of the system --- how to associate events to \aclp*{SM} and, thus, to system states, is addressed in \cref{sec:propagating.weights}.

\topicnote{event short notation}If we (i) omit the naf-literals; (ii) use $\co{x}$ to denote the classical $\neg x$; (iii) and use expressions like $ab$ to denote sets of literals such as $\set{a, b}$, then the event in (\ref{eq:example.event.long}) can be shortened to the equivalent form
\begin{equation}	
	a\co{a}\co{c}.\label{eq:example.event.short}
\end{equation}
Here we follow the convention, from \cite{gelfond1988stable}, of denoting a model by the set of true atoms, stressing that `\emph{falsehood}' results only from the default negation \ie\ $\naf a$ (\ie\ `\texttt{not a}' or `\texttt{\textbackslash+ a}' in logic languages).
More precisely, a model denotation can contain atoms such as $a$ or $\co{b}$ but not literals $\naf a, \naf \co{b}$. 
Our choice to represent sensor input using both positive and negative atoms is based on the following points: 
(i) it can be the case that there are two different sensors, $\sigma_a$ for the `positive' values and $\sigma_{\co{a}}$ for the `negative' ones;
(ii) the case where a single sensor $\sigma_{\hat{a}}$ always yields either `positive' or `negative' values can be represented by the rule $a \vee \neg a$; also, 
(iii) a closed-world assumption, where absence of sensor output means (classical) negation, can be represented by the rules $\co{a} \clause \naf a, a \clause \naf \co{a}$. 

\topicnote{weights + probabilities}%
Like in \lpmln, we annotate facts (\ie\ atoms) with weights \cite{lee2017lpmln} instead of probabilities, that result from normalization of the former.
\topicnote{propagation}
By \emph{propagation} we mean the use of those weights to define a `weight' function on the \aclp{SM} and then extended it to all the events in the program domain.
\topicnote{\acp{SM} + non-unique/non-deterministic}%
The step from facts to \acp{SM} is non-deterministic in the sense that a given set of facts may entail zero, one or more \acp{SM} (see \cref{ex:fruitful}).
This is a well-known situation, explained in \cref{sec:syntax.and.semantics} and also in \cite{verreet2022inference,pajunen2021solution,cozman2020joy,baral2009probabilistic}, when propagating weights to \aclp{SM}: \emph{How to distribute the weight of a fact to all the entailed \acp{SM}?}
\topicnote{parameters}
We represent non-unique choices by parameters that can be later estimated from further information, \ie\ data.
This approach enables later refinement from additional evidence and also scoring a program  w.r.t.\ the available data.

\topicnote{\acp*{ASP}}
\Acfp{ASP} are logic programs based on the \acl{SM} semantics of \acp{NP} \cite{lifschitz2002answer,lifschitz2008twelve}. 
\Acp{ASP} represent a problem and the resulting models (\emph{answer sets}) can be found using different approaches, including SAT solving technology \cite{gebser2011potassco,adrian2018asp,niemela1997smodels} or through top-down search \cite{alberti2017cplint,arias2020justifications,marple2017computing}.

\topicnote{\acs*{DS}}
The \ac{DS} \cite{sato1995statistical,riguzzi2022foundations} is the base for \acf{PLP}, to extend \aclp{LP} with probabilistic reasoning.
\sidenote{\franc{addressing} better examples that show the real usefulness of this approach \franc{ \eg\ a toy problem with a biased coin; also see the examples on the other systems.}}%
We are particularly interested in the following setting and application scenarios of such an extension to logic programs:
\begin{description}
	\item[Partial Observability] \topicnote{partial observability}A system's state can have \emph{hidden variables}, not reported by the sensors. 
	\item[Sensor Error] \topicnote{sensor error}Information gathered from the sensors can carry \emph{stochastic perturbations}.
	\item[Representation Induction] \topicnote{representation induction}Combine representations with data to induce \emph{more accurate representations}.
	\item[Probabilistic Tasks] \topicnote{probabilistic tasks}Support common probabilistic tasks such as \acf{MAP}, \ac{MLE} and \ac{BI} on the \emph{representation domain} \ie\ the set of all events.
\end{description}

\noindent The remainder of this article is structured as follows: \cref{sec:framework} provides necessary context.
In \cref{sec:syntax.and.semantics} we discuss the syntax and semantics of our proposed language for \acfp{WASP}.
We also define a weight function over total choices and address the issue of how to propagate these probabilities from facts to events, in \cref{sec:propagating.weights}.
This method relies on an equivalence relation on the set of events.
Furthermore, we express uncertainty by polynomial expressions over variables which depend on the total choices and on the stable models. By then the \emph{Partial Observability} and \emph{Sensor Error} points are addressed.  An evidence that our approach is sound is given by \cref{cor:prob.one} where we show that replacing certain facts (with weight $1.0$) by deterministic facts doe not change the probability.
Some final remarks and ideas for future developments including \emph{Representation Induction} and \emph{Probabilistic Tasks} are presented in \cref{sec:discussion}.

\section{Framework}
\label{sec:framework}

\topicnote{\ac*{WASP}}%
We start by refining \cref{asp:system.representation.data,asp:sensor.representation.events} to set `representation' as an `\ac{ASP} with weights':
\begin{assumption}[Representation by \Aclp*{ASP} with Weigths]\label{asp:wasp.represent.systems}
	
	\phantom{~}
	\begin{description}
		\item[\Aclp*{ASP} and Weights] A \emph{representation} of a system is an \emph{\acl{ASP} }that includes \emph{weighted facts}.
	\end{description}
	\topicnote{\acl*{WF}}
	A \emph{\acf{WF}} or an \emph{annotated fact} has the from `$\weightfact{a}{w}$' where $a$ is an atom, $w \in \intcc{0, 1}$, and defines the disjunctive fact $a \vee \co{a}$. A model will include either $a$ or $\co{a}$ but never both.	
\end{assumption}

\topicnote{\acl*{TC}}
Selecting one of $a, \co{a}$ for each \ac{WF} in a program will lead to a \emph{\acf{TC}}
\footnote{We use the term `choice' for historical reasons, \eg\ see \cite{cozman2020joy} even though not related to the usual `choice' elements, atoms or rules from \eg\ \cite{calimeri2020aspcore}.}.
\topicnote{propagating}
Propagating weights from \acp{WF} to \acp{TC} is relatively straightforward (see \cref{eq:weight.total.choice}) but propagation  to events requires a step trough the program's \aclp*{SM},  addressed in \cref{sec:propagating.weights}. 

\paragraph{About Propagating Weights from \Aclp{TC}.}%

Our goal to propagate weights from \acp{TC} to \acp{SM} and from there to any event soon faces a non-deterministic problem, illustrated by the program $\SBF$ in \cref{ex:fruitful}, \cref{ssec:propagating.weights}, where multiple \acp{SM}, $ab$ and $ac$, result from a single \ac{TC}, $a$, but \emph{there is not enough information in the representation to assign a single weight to each one of those \acp{SM}}.
In \cref{subsec:from.tchoices.to.events} we use algebraic variables\footnote{We explicitly write `algebraic' variables to avoid confusion with logic variables.} to describe the lack of information in a representation in order to deterministically propagate the weight to the \aclp{SM} and events.
The values of those variables can be estimated from available data, a contribute to the \emph{Representation Induction} application goal, set in \cref{sec:introduction}.

The lack of a unique \acl{SM} from a \acl{TC} is also addressed in~\cite{cozman2020joy} along an approach using credal sets.
In another related work~\cite{verreet2022inference}, epistemic uncertainty (or model uncertainty) is considered as lack of knowledge about the underlying model, that may be mitigated via further observations.
This seems to presuppose a bayesian approach to imperfect knowledge in the sense that having further observations allows one to improve or correct the model.
Indeed, that approach uses Beta distributions on the total choices in order to be able to learn a distribution on the events.
This approach seems to be specially fitted to being able to tell when some weight lies beneath some given value.
Our approach is similar in spirit, while remaining algebraic in the way the propagation of weights is addressed.

\section{Syntax and Semantics of Weighted ASP}
\label{sec:syntax.and.semantics}

We start the formal definition of \emph{\acl*{WASP}} with the setup and discussion of a minimal syntax and semantics of propositional \ac{ASP}, without variables, functors or relation symbols, but enough to illustrate our method to propagate weights from annotated facts to events. From now on `$\neg x$' and `$\co{x}$' denote classical negation and `$\naf x$' default negation.

\paragraph{Syntax.}
\label{par:syntax}

We slightly adapt \cite{calimeri2020aspcore}. Let $\ATOMSset$ be a finite set of symbols, the \emph{positive atoms}. 
For $a \in \ATOMSset$, the expressions $a$ and $\neg a$ (the later a \emph{negative atom}, also denoted $\co{a}$) are \emph{(classical) atoms}.
\defnote{atom}
If $a$ is an atom, the expressions $a$ and $\naf a$ are \emph{(naf-)literals}.
\defnote{literal}
A \textit{rule} is of the form
\defnote{rule}
$$
h_1 \disj \cdots \disj h_n \clause 
	b_1 \conj \cdots \conj b_m
$$
where the $h_i$ are atoms and the $b_j$ are literals. 
The symbol `$\clause$' separates the \textit{head} from the \textit{body}. 
\defnote{head;body}
A rule is a \emph{constraint}\footnote{An `\emph{integrity constraint}' in \cite{calimeri2020aspcore}.} if $n = 0$, \emph{normal} if $n = 1$, \emph{disjunctive} if $n > 1$, and a \emph{fact} if $m = 0$.
\defnote{constraint;normal;disjunctive;fact}

An \textit{\acf{ASP}} is a set $P$ of facts and rules, denoted, resp. $\FACTSset\at{P}$ and $\RULESset\at{P}$, or simply $\FACTSset$ and $\RULESset$.
\defnote{program}
In a \textit{normal program} all the rules are normal.
Notice that programs with constraint or disjunctive rules can be converted into normal programs \cite{gebser2022answer}.

\paragraph{Semantics.}

The standard semantics of an \ac{ASP} has a few different, but equivalent, definitions \cite{lifschitz2008twelve}.
A common definition is as follows \cite{gelfond1988stable}: let $P$ be a \acl{NP}.
The Gelfond/Lifschitz \emph{reduct} of $P$ relative to the set $X$ of atoms results from (i) deleting rules that contain a literal of the form $\naf p$ in the body with $p \in X$ and then (ii) deleting the remaining literals of the form $\naf q$ from the bodies of the remaining rules.
\defnote{reduct}
Now, $M$ is a \textit{\acf{SM}} of $P$ if it is the minimal model of the reduct of $P$ relative to $M$.
\defnote{\acl{SM}}
We denote by $\MODELset\at{P}$, or simply $\MODELset$, the set of \aclp{SM} of the program~$P$.

\paragraph{Evaluation without Grounding.}

While the most common form to generate \aclp{SM} is based on \emph{grounding}, a different approach is the one supported by \texttt{s(CASP)}, a system that can evaluate \ac{ASP} programs with function symbols (functors) and constraints without grounding them either before or during execution, using a method similar to SLD resolution~\cite{marple2017computing,arias2020justifications}.
\sidenote{Improve grounding and propositional cases.}
This enables the generation of human readable explanations of the results of programs and addresses two major issues of grounding-based solvers, that 
(i) either do not support function symbols or, using finite domains, lead to exponential groundings of a program and 
(ii) compute the complete model of the grounded program when, in some scenarios, it is desirable to compute only a partial stable model containing a query.

\subsection*{\acsp{WASP} and their Derived Programs}

\emph{\Acfp{WASP}} extend \acp{ASP} by adding facts with weight annotations, \ie\ \emph{\acfp{WF}}.
\defnote{\acl{WF}}
Notice that we have $w\in \intcc{0,1}$ but $w$ is interpreted as a \emph{balance} between the \emph{choices} $a$ and $\co{a}$, and \emph{not a probability}.

We denote the set of \aclp{WF} of a program $P$ by $\WEIGHTFset\at{P}$, and $\WATOMset\at{P}$ the set of positive atoms in $\WEIGHTFset$. When possible we simplify notation as $\WEIGHTFset, \WATOMset$.

Our definition of \acp{WASP} is restricted because our goal is to illustrate the core of a method to propagate weights from \aclp{TC} to events.
Our programs do not feature logical variables, relation symbols, functors or other elements common in standard \ac{ASP}.
Also, weight annotations are not associated to (general) rule heads or disjunctions.
However, these last two restrictions do not reduce the expressive capacity of the language because, for the former, a rule with an annotated head can be rewritten as:
\begin{equation*}
	\weightrule{\alpha}{w}{\beta} \qquad \Longrightarrow \qquad
	\left\{
		\begin{aligned}
		\weightfact{\gamma & }{w},                       %
		\\
		\alpha           & \clause \beta \wedge \gamma
	\end{aligned}
	\right.
\end{equation*}
while annotated disjunctive facts
\begin{equation*}
	\weightfact{\alpha \vee \beta}{w} \qquad \Longrightarrow \qquad
	\left\{
	\begin{aligned}
		\weightfact{\gamma  & }{w},           %
		\\
		\alpha \vee \beta & \clause \gamma,           %
		\\
		\co{\alpha}& \clause \co{\gamma},  &&
		\co{\beta}& \clause \co{\gamma}.
	\end{aligned}
	\right.
\end{equation*}

\paragraph{Derived Program.}

The \emph{derived program} of a \ac{WASP} is the \ac{ASP} obtained by replacing each \acl{WF} $\weightfact{a}{w}$ by a disjunction $a \disj \co{a}$.
\defnote{derived}
The \textit{\aclp{SM}} of a \acs{WASP} program are the \aclp{SM} of its derived program.
So, we also denote the set of \acp{SM} of a (derived or) \acs{WASP} program $P$ by $\MODELset\at{P}$ or $\MODELset$.

\paragraph{Events.}

An \emph{event} of a program $P$ is a set of atoms from $P$.
\defnote{event}
We denote the set of events by
$\EVENTSset\at{P}$ or simply $\EVENTSset$.
An event $e \in \EVENTSset$ which includes a set $\set{x, \co{x}}$ is said to be \textit{inconsistent}; otherwise it is \textit{consistent}.
\defnote{(in)consistent}
The set of consistent events is denoted by $\CONSISTset$.

\begin{example}
	\label{ex:fruitful}\em

	Consider the following \acl{WASP} :
	\begin{equation}\label{eq:fruitful}
		\SBF = \left\{\begin{split}
			\weightfact{a & }{0.3},   %
			\\
			b \vee c    & \clause a
		\end{split}
		\right.
	\end{equation}
	which has the set $\WEIGHTFset = \{ a:0.3 \}$ of \aclp{WF}.
This program is transformed into the \acl{ASP}
	\begin{equation}\label{eq:derived.fruitful}
		\SBF' = \left\{\begin{split}
			a \vee \co{a} & ,          %
			\\
			b \vee c      & \clause a,
		\end{split}
		\right.
	\end{equation}
	with the set
	$ \MODELset = \{\, \co{a}, ab, ac\, \} $
	of stable models.

The atoms of these programs are 
\begin{equation}
	\ATOMSset = \set{a, \co{a}, b, \co{b}, c, \co{c}}
	\label{eq:atoms.fruitful}
\end{equation}
and the events are%
\footnote{$\powerset{X}$ is the \emph{power set} of $X$: $A \in \powerset{X} \Leftrightarrow A \subseteq X$.} 
\begin{equation}
	\EVENTSset = \powerset{\ATOMSset}.
	\label{eq:fruitful.events}
\end{equation}
\end{example}

\subsection*{Total Choices and their Weights}
\label{ssec:totalchoices.weights}

A disjunctive head $a \disj \co{a}$ in the derived program represents a single \textit{choice}, either $a$ or $\co{a}$.
\defnote{\acl{TC}}
We define the set of \emph{\acfp{TC}} of a set of atoms by the recursion 
\begin{equation}
\left\{
\begin{aligned}
	\TCHOICEset\at{\emptyset} &= \emptyset, \\
	\TCHOICEset\at{X \cup a} &= 
		\bigcup_{t \in \TCHOICEset\at{X}} \del{t \cup a}
		\quad \cup \quad
		\bigcup_{t \in \TCHOICEset\at{X}} \del{t \cup \co{a}}
\end{aligned}
\right.\label{eq:def.total.choice}	
\end{equation}
where $X$ is a set of atoms and $a$ is an atom. The \aclp{TC} of a \ac{WASP} $P$ are the \acp{TC} of it's positive atoms: $\TCHOICEset\at{P} = \TCHOICEset\at{\WATOMset\at{P}}$. When possible we write simply $\TCHOICEset$. 






Given a \ac{WASP}, the \emph{weight of the \acl{TC} $t \in \TCHOICEset$} is given by the product
\defnote{$\wgtT$}
\begin{equation}
	\wgtT\at{t} =
	\prod_{\substack{
			\weightfact{a}{w}~ \in ~\WEIGHTFset,%
	\\
			a~ \in~ t}} w\quad \times\quad
	\prod_{\substack{
			\weightfact{a}{w}~ \in ~\WEIGHTFset,%
	\\
			\co{a}~\in~ t}} \co{w}.
	\label{eq:weight.total.choice}
\end{equation}

Here $\co{w} = 1 - w$, and we use the subscript in $\wgtT$ to explicitly state that this function concerns total choices.
Later we'll use subscripts $\MODELset, \EVENTSset$ to deal with weight functions of \aclp{SM} and events, $\wgtM, \wgtE$.

\ifExamples
	\begin{example}[Weights for \aclp{TC}]%
		\label{ex:weight.total.choices}
		\em

		Continuing with the program from \cref{ex:total.choices}, the weights of the \aclp{TC} are:
		\begin{equation*}
			\begin{aligned}
				\wgtT\at{\set[1]{a, b}}           & = 0.3 \times 0.6           &                  & = 0.18, %
				\\
				\wgtT\at{\set[1]{a, \co{b}}}      & = 0.3 \times \co{0.6}      & = 0.3 \times 0.4 & = 0.12, %
				\\
				\wgtT\at{\set[1]{\co{a}, b}}      & = \co{0.3} \times 0.6      & = 0.7 \times 0.6 & = 0.42, %
				\\
				\wgtT\at{\set[1]{\co{a}, \co{b}}} & = \co{0.3} \times \co{0.6} & = 0.7 \times 0.4 & = 0.28.
			\end{aligned}
		\end{equation*}

		Suppose that in this program we change the weight in $\weightfact{b}{0.6}$
		to $\weightfact{b}{1.0}$.
		Then the \aclp{TC} are the same but the weights become
		\begin{equation*}
			\begin{aligned}
				\wgtT\at{\set[1]{a, b}}           & = 0.3 \times 1.0           &                  & = 0.3, %
				\\
				\wgtT\at{\set[1]{a, \co{b}}}      & = 0.3 \times \co{1.0}      & = 0.3 \times 0.0 & = 0.0, %
				\\
				\wgtT\at{\set[1]{\co{a}, b}}      & = \co{0.3} \times 1.0      & = 0.7 \times 1.0 & = 0.7, %
				\\
				\wgtT\at{\set[1]{\co{a}, \co{b}}} & = \co{0.3} \times \co{1.0} & = 0.7 \times 0.0 & = 0.0.
			\end{aligned}
		\end{equation*}
		which, as expected from stating that $\weightfact{b}{1.0}$, is like having $b$ as a (deterministic) fact:
		\begin{equation*}
			\begin{split}
				\weightfact{a & }{0.3},             %
				\\
				b           & ,                   %
				\\
				c           & \clause a \wedge b.
			\end{split}
		\end{equation*}

		This will also be stated in \cref{teo:prob.one}, when the proper definitions
		are set.
	\end{example}
\fi Some \aclp{SM} are entailed from some \aclp{TC} while other \acp{SM} are entailed by other \acp{TC}.
We write $\tcgen{t}$ to represent the set of \aclp{SM} entailed by the \acl{TC} $t \in \TCHOICEset$.

\ifExamples
	\begin{example}[\Aclp{SM} and \aclp{TC}.]%
		\em

		Continuing \cref{ex:fruitful}, the \acl{TC} $t = \set{\co{a}}$ entails a
		single \acl{SM}, $\co{a}$, so $ \MODELset\at{\set{\co{a}}} = \set{\co{a}} $
		and, for $t = \set{a}$, the program has two \aclp{SM}: $
			\MODELset\at{\set{a}} = \set{ab, ac}$.
		\begin{equation*}
			\begin{array}{l|r}
				t \in \TCHOICEset   & \MODELset\at{t} %
				\\
				\hline \set{\co{a}} & \co{a}          %
				\\
				\set{a}             & ab, ac
			\end{array}
		\end{equation*}

		The second case illustrates that propagating weights from \aclp{TC} to
		\aclp{SM} entails a non-deterministic step: \textit{How to propagate the
			weight $\wgtT\at{\set{a}}$ to each one of the \aclp{SM} $ab$ and $ac$?}
	\end{example}
\fi

Our goal can now be rephrased as to know how to propagate the weights of the program's \aclp{TC}, $\wgtT$, in \cref{eq:weight.total.choice}
to the program's events, $\wgtE$ to be defined later, in \cref{eq:weight.events,eq:weight.events.unconditional}.

\paragraph{Propagation of Weights.}

As a first step to propagate weight from \aclp{TC} to events, consider the $\SBF$ program of \cref{eq:fruitful} and a possible propagation of $\wgtT:\TCHOICEset \to \intcc{0,1}$ from \aclp{TC} to the \aclp{SM}, $\wgtM:\MODELset \to \intcc{0,1}$ (still informal, see \cref{eq:weight.stablemodel}).
%
%
It might seem straightforward, in \cref{ex:fruitful}, to set $\wgtM\at{\co{a}}=0.7$ but there is no explicit way to assign values to $\wgtM\at{ab}$ and $\wgtM\at{ac}$.
We represent this non-determinist by a parameter $\theta$ as in
\begin{equation}
	\begin{aligned}
		\wgtM\at{ab} & = 0.3\, \theta,
		\\
		\wgtM\at{ac} & = 0.3\, (1 - \theta)
	\end{aligned}\label{eq:theta.and.stablemodels}
\end{equation}
to express our knowledge that $ab$ and $ac$ are models entailed from a specific choice and, simultaneously, the inherent non-determinism of that entailment.
In general, it might be necessary to have several such parameters, each associated to a given \acl{SM} $s$ (in \cref{eq:theta.and.stablemodels}, $s = ab$ in the first line and $s = ac$ in the second line) and a \acl{TC} $t$ ($t=a$ above), so we write $\theta_{s,t}$.
Obviously, for reasonable $\theta_{s,t}$, the total choice $t$ must be a subset of the stable model $s$.

Unless we introduce some bias, such as $\theta = 0.5$ as in \lpmln\ \cite{lee2016weighted}, the value for $\theta_{s,t}$ can't be determined just with the information given in the program. But it might be estimated with the help of further information, such as an empirical distribution from a dataset.
Further discussion of this point is outside the scope of this paper.

Now consider the program
\begin{equation}\label{eq:single}
	\left\{\begin{split}
		\weightfact{a & }{0.3},   %
		\\
		b & \clause a \wedge \naf b
	\end{split}
	\right.
\end{equation}
that has a single \ac{SM}, $\co{a}$. Since the weights are not interpreted as probabilities, there is no need to have the sum on the \aclp{SM} equal to $1$. So the weights in the \acp{TC} of \cref{eq:single} only set
$$
\wgtM\at{\co{a}} = 0.7.
$$
In this case, if we were to derive a probability of the \acp{SM}, normalization would give $\pr{\co{a}} = 1.0$.

Also facts without annotations can be transformed into facts with weight $1$:
\begin{equation}
	a \qquad \Longrightarrow \qquad \weightfact{a}{1.0}. \label{eq:noannotation}
\end{equation}

\topicnote{propagation, semantics}
The method that we are proposing does not follow the framework of~\cite{kifer1992theory} and others, where the syntax of the program determines the propagation from probabilities explicitly set either in facts or other elements of the program.
Our approach requires that we consider the semantics, \emph{i.e.}\ the \aclp{SM} of the program, independently of the syntax that provided them. From there we propagate weights to the program's events and then, if required, normalization provides the final probabilities.
Moreover, we allow the occurrence of variables in the weights, in order to deal with the non-determinism that results from the non-uniqueness of \acp{SM} entailed from a single \ac{TC}.
These variables can be later estimated from available data.%

\ifExamples
	\begin{example}[No Syntax Propagation.]
		\label{example:not.syntax.propagation}
		\em

		Consider the program
		\begin{equation*}
			\begin{split}
				\weightfact{a & }{0.3},   %
				\\
				b           & \clause a
			\end{split}
		\end{equation*}

		We don't follow the clause $b \clause a$ to attribute a weight to $b$.
		Instead, that will result from considering how events are related with the
		\aclp{SM} and these with the \aclp{TC}.		
		If one follows the steps of
		\cref{sec:propagating.weights} would get
		\begin{equation*}
			\begin{split}
				\wgtE\at{a}      & = \wgtE\at{b} = \wgtE\at{ab} = 0.05,                          %
				\\
				\wgtE\at{\co{a}} & = \wgtE\at{\co{a} b} = \wgtE\at{\co{a}\co{b}} = \frac{7}{60}, %
				\\
				\wgtE\at{\set{}} & = 0.5.                                                      %
				\\
			\end{split}
		\end{equation*}

	\end{example}
\fi

\ifExamples
	\begin{example}[Events of the fruitful \ac{WASP}.]\label{ex:events}\em

		The atoms of program \cref{eq:fruitful} are $\ATOMSset = \set{a, b, c}$ and
		the literals are
		\begin{equation*}
			\LITERALSset = \set{\, \co{a}, \co{b}, \co{c}, a, b, c\, }.
		\end{equation*}

		In this case, $\EVENTSset$ has $2^6 = 64$ elements.
Some, such as
		$\set{\co{a}, a, b}$, contain an atom and its negation ($a$ and $\co{a}$ in
		that case) and are inconsistent.
The set of atoms $\ATOMSset = \set{a, b, c}$
		above generates $37$ inconsistent events and $27$ consistent events.
Notice
		that the empty set is an event, that we denote by \emptyevent.

		As above, to simplify notation we write events as $\co{a}ab$ instead of
		$\set{\co{a}, a, b}$.
	\end{example}
\fi

\subsection*{Related Approaches and Systems}
\label{ssec:other.approaches}

The core problem of setting a semantics for probabilistic logic programs, the propagation of probabilities from \aclp{TC} to \aclp{SM} in the case of \ac{ASP} or to other types in other logic programming systems (\eg\ to possible worlds in \texttt{Problog}) has been studied for some time~\cite{kifer1992theory,sato1995statistical}.

For example, the \emph{credal set} approach of~\cite{cozman2020joy}, defines $\prT$ in a way similar to \cref{eq:weight.total.choice} but then, for $a \in \ATOMSset, t \in \TCHOICEset$, the probability $\pr{a \given t}$ is unknown but bounded by $\underbar{\prfunc}\at{a \given t}$ and $\overline{\prfunc}\at{a \given t}$, that can be explicitly estimated from the program.

\texttt{Problog} \cite{fierens2015inference,verreet2022inference} extends \texttt{Prolog} with probabilistic facts so that a program specifies a probability distribution over possible worlds.
A \textit{world} is a model of $T \cup R$ where $T$ is a total choice and $R$ the set of rules of a program.
The semantics is only defined for \textit{sound} programs~\cite{riguzzi2013well} \ie, programs for which each possible total choice $T$ leads to a well-founded model that is two-valued or \textit{total}.
The probability of a possible world that is a model of the program is the probability of the total choice.
Otherwise the probability is $0$~\cite{riguzzi2013well,van1991well}.

Another system, based on Markov Logic~\cite{richardson2006markov}, is \lpmln~\cite{lee2016weighted,lee2017lpmln}, whose models result from \textit{weighted rules} of the form $a \clause b \wedge n$ where $a$ is disjunction of atoms, $b$ is conjunction of atoms and $n$ is constructed from atoms using conjunction, disjunction and negation.
For each model there is a unique maximal set of rules that are satisfied by it and the respective weights determine the weight of that model, that can be normalized to a probability.

\subsection*{Towards Propagating Weights from Total Choices to Events}
\label{ssec:propagating.weights}

The program $\SBF$ in \cref{eq:fruitful} from \cref{ex:fruitful} showcases the problem of propagating weights from \aclp{TC} to \aclp{SM} and then to events.
The main issue arises from the lack of information in the program on how to assign \emph{un-biased} weights to the \aclp{SM}.
This becomes crucial in situations where multiple \aclp{SM} result from a single \acl{TC}.

Our \cref{asp:system.representation.data,asp:sensor.representation.events,asp:wasp.represent.systems} enunciate that a \ac{WASP} program represents a \emph{system}; the \emph{states} of that system, which are partially observable and stochastic, are associated to the program's \aclp{SM}; and state \emph{observations} are encoded as events, \ie\ sets of atoms of the program. 
Then:
\begin{enumerate}

	\item With a weight set for the \aclp{SM}, we extend it to any event in the program domain.

	\item In the case where some statistical knowledge is available, for example, in the form of a distribution relating some atoms, we consider it as `external' knowledge about the parameters, that doesn't affect the propagation procedure described below.

	\item However, that knowledge can be used to estimate the parameters $\theta_{s,t}$ and to `score' the program.

	\item\label{item:program.selection} In that case, if a program is but one of many possible candidates, then that score can be used, \eg\ as fitness, by algorithms searching (optimal) programs of a dataset of events.

	\item If events are not consistent with the program, then we ought to conclude that the program is wrong and must be changed accordingly.

\end{enumerate}

Next, we address the problem of propagating a weight, possibly using parameters (\ie\ algebraic variables) such as $\theta$ in \cref{ex:fruitful}, defined on the \aclp{SM} of a program, $\pwM: \MODELset \to \mathbb{R}$, to the events of that program: $\pwE: \EVENTSset \to \mathbb{R}$.
The latter function can then be normalized and set a probability $\prE:\powerset{\EVENTSset} \to \intcc{0,1}$.
This way probabilistic reasoning is consistent with the \ac{ASP} program and follows our interpretation of \aclp{SM} as the states of an observable system.

\section{Propagating Weights}
\label{sec:propagating.weights}
\begin{figure}[t]
	\begin{center}
		\begin{tikzpicture}[node distance=2em]
			\node[event] (E) {$\emptyevent$};
			\node[event, above = of E] (c) {$c$};
			\node[tchoice, left = of c] (a) {$a$};
			\node[event, left = of a] (b) {$b$};
			\node[right = of c] (invis) {};
			\node[smodel, above = of b] (ab) {$ab$};
			\node[smodel, above = of c] (ac) {$ac$};
			\node[event, above right = of ab] (abc) {$abc$};
			\node[event, above left = of ab] (abC) {$\co{c}ab$};
			\node[event, above right = of ac] (aBc) {$\co{b}ac$};
			\node[indep, right = of ac] (bc) {$bc$};
			\node[tchoice, smodel, right = of invis] (A) {$\co{a}$};
			\node[event, right = of bc] (Ac) {$\co{a}c$};
			\node[event, right = of aBc] (Abc) {$\co{a}bc$};
			\draw[doubt] (a) to[bend left] (ab);
			\draw[doubt] (a) to[bend right] (ac);

			\draw[doubt] (ab) to[bend left] (abc);
			\draw[doubt] (ab) to[bend right] (abC);

			\draw[doubt] (ac) to[bend right] (abc);
			\draw[doubt] (ac) to[bend left] (aBc);

			\draw[doubt, dashed] (Ac) to (Abc);

			\draw[doubt] (A) to (Ac);
			\draw[doubt] (A) to[bend right=60] (Abc);

			\draw[doubt] (ab) to[bend right] (E);
			\draw[doubt] (ac) to[bend left=45] (E);
			\draw[doubt] (A) to[bend left] (E);

			\draw[doubt] (ab) to (b);
			\draw[doubt] (ac) to (c);
			\draw[doubt, dashed] (c) to[bend right] (bc);
			\draw[doubt, dashed] (abc) to[bend left] (bc);
			\draw[doubt, dashed] (bc) to (Abc);
			\draw[doubt, dashed] (c) to[bend right] (Ac);
		\end{tikzpicture}
	\end{center}

	\caption{%
		This (partial sub-/super-set) diagram shows some events related to the \aclp{SM} of the program $\SBF$.
		The circle nodes are \aclp{TC} and shaded nodes are \aclp{SM}.
		Solid lines represent relations with the \aclp{SM} and dashed lines some sub-/super-set relations with other events.
		The set of events contained in all \aclp{SM}, denoted by $\consequenceclass$, is $\set{ \emptyevent }$ in this example, because $\co{a} \cap ab \cap ac = \emptyset = \emptyevent$.}
	\label{fig:ex:fruitful}
\end{figure}

The diagram in \cref{fig:ex:fruitful} illustrates the problem of propagating weights from \aclp{TC} to \aclp{SM} and then to general events in an \emph{edge-wise} process, \ie\ where the value in a node is defined from the values in its neighbors.
This quickly leads to interpretation issues concerning weight, with no clear systematic approach.
For example, notice that $bc$ is not directly related with any \acl{SM}.
Propagating values through edges would assign a value ($\not= 0$) to $bc$ hard to explain in terms of the semantics of the program.
Instead, we propose to settle such propagation on the relation an event has with the \aclp{SM}.

\subsection{An Equivalence Relation}
\label{subsec:equivalence.relation}
\begin{figure}[t]
	\begin{center}
		\begin{tikzpicture}[node distance=2em]
			\node[event] (E) {$\emptyevent$};
			\node[event, above = of E] (c) {$c$};
			\node[tchoice, left = of c] (a) {$a$};
			\node[event, left = of a] (b) {$b$};
			\node[right = of c] (invis) {};
			\node[smodel, above = of b] (ab) {$ab$};
			\node[smodel, above = of c] (ac) {$ac$};
			\node[event, above right = of ab] (abc) {$abc$};
			\node[event, above left = of ab] (abC) {$\co{c}ab$};
			\node[event, above right = of ac] (aBc) {$\co{b}ac$};
			\node[indep, right = of ac] (bc) {$bc$};
			\node[tchoice, smodel, right = of invis] (A) {$\co{a}$};
			\node[event, right = of bc] (Ac) {$\co{a}c$};
			\node[event, right = of aBc] (Abc) {$\co{a}bc$};
			\path[draw, rounded corners, pattern=north west lines, opacity=0.2]
			(ab.west) -- (ab.north west) --
			(abC.south west) -- (abC.north west) -- (abC.north) --
			(abc.north east) -- (abc.east) -- (abc.south east) --
			(ab.north east) -- (ab.east) -- (ab.south east) --
			(a.north east) --
			(E.north east) -- (E.east) --
			(E.south east) -- (E.south) --
			(E.south west) --
			(b.south west) --
			(ab.west) ;
			\path[draw, rounded corners, pattern=north east lines, opacity=0.2]
			(ac.south west) -- (ac.west) -- (ac.north west) --
			(abc.south west) -- (abc.west) -- (abc.north west) --
			(aBc.north east) -- (aBc.east) -- (aBc.south east) --
			(ac.north east) --
			(c.east) --
			(E.east) -- (E.south east) -- (E.south) -- (E.south west) --
			(a.south west) -- (a.west) -- (a.north west) -- (a.north) --
			(ac.south west) ;
			\path[draw, rounded corners, pattern=vertical lines, opacity=0.2]
			(Ac.west) --
			(Abc.north west) -- (Abc.north) --
			(Abc.north east) -- (Abc.south east) --
			(Ac.north east) -- (Ac.south east) --
			(A.east) -- (A.south east) --
			(E.south east) -- (E.south) --
			(E.south west) -- (E.west) --
			(E.north west) --
			(Ac.south west) -- (Ac.west);
		\end{tikzpicture}
	\end{center}

	\caption{%
		Classes of (consistent) events related to the \aclp{SM} of $\SBF$ are defined through sub-/super-set relations. 
		In this picture we can see, for example, that $\set{\co{c}ab, ab, b}$ and $\set{a, abc}$ are part of different classes, represented by different fillings.
		As before, the circle nodes are \aclp{TC} and shaded nodes are \aclp{SM}.
		Notice that $bc$ is not in a filled area.}
	\label{fig:ex:fruitful.classes}
\end{figure}

\begin{figure}[t]
	\begin{center}
		\begin{tikzpicture}[3d view]
			\node[event] (INDEPENDENT) at (0,0,0){$\indepclass$};
			\node[smodel] (A) at (0,0,2) {$\co{a}$};
			\node[smodel] (ab) at (3,0,0) {$ab$};
			\node[smodel] (ac) at (0,3,0) {$ac$};
			\node[event] (Aab) at (3,0,2) {$\co{a},ab$};
			\node[event] (Aac) at (0,3,2) {$\co{a},ac$};
			\node[event] (abac) at (3,3,0) {$ab,ac$};
			\node[event] (Aabac) at (3,3,2) {$\consequenceclass$};
			\node[event] (INCONSISTENT) at (-4, 0, 2) {$\inconsistent$
				(inconsistent)};
			\draw[->] (INDEPENDENT) -- (A);
			\draw[->] (INDEPENDENT) -- (ab);
			\draw[->] (INDEPENDENT) -- (ac);
			\draw[->] (A) -- (Aab);
			\draw[->] (A) -- (Aac);
			\draw[->] (ab) -- (Aab);
			\draw[->] (ab) -- (abac);
			\draw[->] (ac) -- (Aac);
			\draw[->] (ac) -- (abac);
			\draw[->] (Aab) -- (Aabac);
			\draw[->] (Aac) -- (Aabac);
			\draw[->] (abac) -- (Aabac);
		\end{tikzpicture}
	\end{center}

	\caption{%
		Lattice of the \aclp{SC} from \cref{ex:fruitful}.
		In this diagram the nodes are the different \aclp{SC} that result from the \aclp{SM}, plus the \emph{inconsistent} class ($\inconsistent$).
		The bottom node ($\indepclass$) is the class of \emph{independent} events, those that have no sub-/super-set relation with any \ac{SM} and the top node ($\consequenceclass$) represents events related with all the \acp{SM} \ie\ the \emph{consequences} of the program.
		As in previous diagrams, shaded nodes represent the \acp{SM}.}
	\label{fig:fruitful.lattice}
\end{figure}

Our path to propagate weights starts with the perspective that \aclp{SM} play a role similar to \emph{prime factors} or \emph{principal ideals}.
The \aclp{SM} of a program play a role akin to `irreducible events' entailed from that program and any event must be considered under its relation with the \aclp{SM}.

From \cref{ex:fruitful} (\ie\ $\SBF$) and \cref{fig:ex:fruitful.classes} consider the \aclp{SM} $\co{a}, ab, ac$ and events $a, abc$ and $c$.
While $a$ is related with (i.e.~contained in) both $ab, ac$, the event $c$ is related only with $ac$.
So, $a$ and $c$ are \emph{related with different sets of \aclp{SM}.}
On the other hand, $abc$ contains both $ab, ac$.
Therefore $a$ and $abc$ are \emph{related with the same set of \aclp{SM}.}
We proceed to formalize this relation.

The \textit{\acf{SC}} of the event $e\in \EVENTSset$ is\defnote{$\stablecore{e}$}
\begin{equation}
	\stablecore{e} := \set{\, s \in \MODELset \given s \subseteq e \vee e \subseteq s\, } \label{eq:stable.core}
\end{equation}
where $\MODELset$ is the set of \aclp{SM}.

Notice that the minimality of \aclp{SM} implies that either $e$ is a \acl{SM} or at least one of $\exists s \del{s \subseteq e}, \exists s \del{e \subseteq s}$ is false \emph{i.e.,}\ no \acl{SM} contains another.

\ifExamples
	\begin{example}[Stable cores.]
		\label{ex:stable.cores}
		\em

		Continuing \cref{ex:fruitful}, depicted in
		\cref{fig:ex:fruitful,fig:ex:fruitful.classes,fig:fruitful.lattice}, and
		$\MODELset = \set{ab, ac, \co{a}}$, consider the following \aclp{SC} of some
		events:
		\begin{equation*}
			\begin{aligned}
				\stablecore{a}           & = \set{s \in \MODELset \given s \subseteq a \vee a \subseteq s}                                        & = \set{ab, ac}      %
				\\
				\stablecore{abc}         & = \set{s \in \MODELset \given s \subseteq abc \vee abc \subseteq s}                                    & = \set{ab, ac}      %
				\\
				\stablecore{\co{c}ab}    & = \set{s \in \MODELset \given s \subseteq \co{c}ab \vee \co{c}ab \subseteq s}                          & = \set{ab}          %
				\\
				\stablecore{bc}          & = \set{s \in \MODELset \given s \subseteq bc \vee bc \subseteq s}                                      & = \emptyset         %
				\\
				\stablecore{\emptyevent} & = \set{s \in \MODELset \given s \subseteq \emptyset \vee \emptyset \subseteq s} = \set{ab, ac, \co{a}} & = \consequenceclass %
				\\
			\end{aligned}
		\end{equation*}

		Events $a$ and $abc$ have the same \ac{SC}, while $\co{c}ab$ has a different
		\ac{SC}.
Also, $bc$ is \emph{independent of} (\emph{i.e.}\ not related to)
		any \acl{SM}.
Since events are sets of literals, the empty set is an event
		and a subset of any \ac{SM}.
	\end{example}
\fi

We now define an equivalence relation so that two events are related if either both are inconsistent or both are consistent and, in the latter case, with the same \acl{SC}.
\begin{definition}[Equivalence Relation on Events]\label{def:equiv.rel}

	For a given program, let $u, v \in \EVENTSset$.
	The equivalence relation\defnote{$\sim$}
	$u \sim v$ is defined by
	\begin{equation}
		u,v \not\in\CONSISTset \vee \del{u,v \in \CONSISTset \wedge \stablecore{u} = \stablecore{v}}.\label{eq:equiv.rel}
	\end{equation}

\end{definition}

This equivalence relation defines a partition on the set of events, where each class holds a unique relation with the \aclp{SM}.
In particular we denote each class by:\defnote{$\class{e}$}
\begin{equation}
	\class{e} =
	\begin{cases}
		\inconsistent := \EVENTSset \setminus \CONSISTset
		 & \text{if~} e \in \EVENTSset \setminus \CONSISTset, %
		\\
		\set{u \in \CONSISTset \given \stablecore{u} = \stablecore{e}}
		 & \text{if~} e \in \CONSISTset.
	\end{cases}\label{eq:event.class}
\end{equation}
where $\inconsistent$ denotes the set $\EVENTSset\setminus\CONSISTset$ of \emph{inconsistent} events, \ie\ events that contain $\set{x,\co{x}}$ for some atom $x$.\defnote{$\inconsistent$}

\begin{proposition}[Class of the Program's Consequences]
	\label{prop:consequence.class}
	\defnote{$\emptyevent,\consequenceclass$}
	Let $\emptyevent$ be the empty set event (notice that $\emptyevent = \emptyset \in \EVENTSset$)
	\footnote{We adopt the notation `$\emptyevent$' for \emph{empty word}, from formal languages, to distinguish `$\emptyset \in \EVENTSset$' from `$\emptyset \subset\EVENTSset$'.}, and $\consequenceclass$ the \emph{consequence class} of (consistent) events related with all the
	\aclp{SM}.
	Then
	\begin{equation}
		\class{\emptyevent} = \consequenceclass.
	\end{equation}
\end{proposition}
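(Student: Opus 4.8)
The plan is to unwind both sides of the claimed equality down to the single condition $\stablecore{\cdot} = \MODELset$ and check that they coincide. First I would note that the empty event is consistent: since $\emptyevent = \emptyset$ contains no pair $\set{x,\co{x}}$, we have $\emptyevent \in \CONSISTset$, so the second branch of \cref{eq:event.class} applies and $\class{\emptyevent} = \set{u \in \CONSISTset \given \stablecore{u} = \stablecore{\emptyevent}}$.

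The crux is to compute $\stablecore{\emptyevent}$. Substituting $e = \emptyevent = \emptyset$ into \cref{eq:stable.core}, a stable model $s$ lies in $\stablecore{\emptyevent}$ iff $s \subseteq \emptyset$ or $\emptyset \subseteq s$. The second disjunct holds for every $s$, because the empty set is a subset of any set, so every $s \in \MODELset$ qualifies and hence $\stablecore{\emptyevent} = \MODELset$. Feeding this back in gives $\class{\emptyevent} = \set{u \in \CONSISTset \given \stablecore{u} = \MODELset}$.

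It then remains only to recognize that this right-hand side is exactly $\consequenceclass$. By definition $\consequenceclass$ collects the consistent events related with all of the stable models; an event $e$ is related with a stable model $s$ when $s \subseteq e \vee e \subseteq s$, that is, when $s \in \stablecore{e}$, so being related with every stable model is precisely the condition $\stablecore{e} = \MODELset$. Therefore $\consequenceclass = \set{u \in \CONSISTset \given \stablecore{u} = \MODELset} = \class{\emptyevent}$, which is the claim.

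I do not expect a genuine obstacle here; the only point that demands care is the bookkeeping that $\consequenceclass$, which is introduced informally through the figures as the class of events ``related with all stable models,'' coincides with the set cut out by the formal condition $\stablecore{\cdot} = \MODELset$. That identification is what makes the two descriptions agree, so it is worth stating explicitly rather than leaving it implicit.
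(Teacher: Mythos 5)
Your proposal is correct and follows essentially the same route as the paper's own proof: both arguments hinge on the observation that $\stablecore{\emptyevent} = \MODELset$ (since $\emptyset \subseteq s$ for every stable model $s$) and then identify $\class{\emptyevent}$ with $\consequenceclass$ via the definition of the equivalence relation. Your version is, if anything, slightly more explicit than the paper's about why $\stablecore{\emptyevent} = \MODELset$ and about the consistency of $\emptyevent$, but there is no substantive difference.
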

\begin{proof}
	Let $x \in \class{\emptyevent}$ be consistent.
	Then $x \sim \lambda$. Since $\stablecore{\emptyevent} = \MODELset$ also $\stablecore{x} = \MODELset$. Hence  $x\in\consequenceclass$.

	Now, let $x \in \consequenceclass$. So $x$ is consistent and, for each $s \in \MODELset$, either $s \subseteq x$ or $s \supseteq x$. So, $\stablecore{x} = \MODELset = \stablecore{\emptyevent}$. By \cref{def:equiv.rel}, $x \sim \emptyevent$ \ie\ $x \in \class{\emptyevent}$.
	\hfill
\end{proof}

The combinations of \aclp{SM}, \ie~the \aclp{SC}, together with the set of inconsistent events ($\inconsistent$) forms a set of representatives for the equivalence relation $\sim$.
Since all events within a consistent equivalence class have the same \acl{SC}, we are interested in functions (including weight assignments), that are constant within classes.
A function $f:\EVENTSset\to Y$, where $Y$ is any set, is said to be \emph{coherent} if
\begin{equation}
	\forall e\in \EVENTSset~\forall u\in \class{e} \left( f\at{u} = f\at{e} \right).
\end{equation}

Considering coherent functions, in the specific case of \cref{eq:fruitful}, instead of dealing with the $2^6 = 64$ events, we need to consider only the
$2^3 + 1 = 9$ classes, well defined in terms of combinations of the \aclp{SM}, to define coherent functions.
In general, a program with $n$ atoms and $m$ \aclp{SM} has $2^{2n}$ events and $2^m + 1$ \aclp{SC} --- but easily $m \gg n$.

\ifExamples
	\begin{example}[Events classes.]\label{ex:classes}\em

		Consider again \cref{ex:fruitful}.
As previously stated, the \aclp{SM} are
		the elements of $\MODELset = \set{\co{a}, ab, ac}$ so the quotient set of
		this relation is
		\begin{equation*}
			\class{\EVENTSset} = \set{
				\begin{array}{lll}
					\inconsistent,           &
					\indepclass,             &
					\stablecore{\co{a}},%
					\\
					\stablecore{ab},         &
					\stablecore{ac},         &
					\stablecore{\co{a}, ab},%
					\\
					\stablecore{\co{a}, ac}, &
					\stablecore{ab, ac},     &
					\consequenceclass
				\end{array}
			},
		\end{equation*}
		where $\indepclass$ denotes the class of \emph{independent
			events} $e$ such that $\stablecore{e} = \set{\emptyset}$,
		while $\consequenceclass = \stablecore{\MODELset}$ is the set of
		events related with all \acp{SM}.
We have:
		\begin{equation*}
			\begin{array}{l|lr}
				\stablecore{e}
				       & \class{e}
				       & \# \class{e}                                                                           %
				\\
				\hline
				\inconsistent
				       & \co{a}a, \ldots
				       & 37                                                                                     %
				\\
				\indepclass
				       & \co{b}, \co{c}, bc, \co{b}a, \co{b}c, \co{bc}, \co{c}a, \co{c}b, \co{bc}a
				       & 9                                                                                      %
				\\
				\co{a}
				       & \co{a}, \co{a}b, \co{a}c, \co{ab}, \co{ac}, \co{a}bc, \co{ac}b, \co{ab}c, \co{abc}
				       & 9                                                                                      %
				\\
				ab     & b, ab, \co{c}ab                                                                    & 3 \\
				ac     & c, ac, \co{b}ac                                                                    & 3 \\
				\co{a}, ab
				       &
				       & 0                                                                                      %
				\\
				\co{a}, ac
				       &
				       & 0
				\\
				ab, ac & a, abc                                                                             & 2 \\
				\consequenceclass
				       & \emptyevent
				       & 1
				\\
				\hline \class{\EVENTSset}
				       & \EVENTSset
				       & 64
			\end{array}
		\end{equation*}

		Notice that $bc \in \indepclass$, as hinted by
		\cref{fig:ex:fruitful,fig:ex:fruitful.classes}.
	\end{example}
\fi

\subsection{From Total Choices to Events}
\label{subsec:from.tchoices.to.events}

The `propagation' phase, traced by \cref{eq:weight.total.choice} and \crefrange{eq:weight.stablemodel}{eq:weight.events.unconditional}, starts with the weight of \aclp{TC}, $\wgtT\at{t}$, propagates it to the \aclp{SM}, $\wgtm{s}$, and then, within the equivalence relation from \cref{eq:equiv.rel}, to a coherent weight of events,
$\wgte{e}$.
So we are specifying a sequence of functions
\begin{equation}
	\wgtT \longrightarrow \wgtM \longrightarrow \wgtC \longrightarrow \wgtE\label{eq:sequence.functions}
\end{equation}
on successive larger domains
$$
\TCHOICEset \longrightarrow \MODELset \longrightarrow \class{\EVENTSset} \longrightarrow \EVENTSset
$$
such that the last function ($\wgtE$) is a finite coherent function on the set of events and thus, as a final step, it can easily be used to define a probability distribution of events by normalization:
\(
\wgtE \longrightarrow \prE
\).

\subsubsection*{\Aclp{TC} and \Aclp{SM}}
\label{par:prop.totalchoices}

Let's start by looking into the first two steps of the sequence of functions \cref{eq:sequence.functions}: $\wgtT$ and $\wgtM$.

The weight $\wgtT$ of the \acl{TC} $t \in \TCHOICEset$ is already given by \cref{eq:weight.total.choice}.

Recall that each \acl{TC} $t \in \TCHOICEset$, together with the rules and the other facts of a program, defines the set \tcgen{t} of \aclp{SM} associated with that choice.
Given a \acl{TC} $t \in \TCHOICEset$, a \acl{SM}
$s \in \MODELset $, and formal variables or values $\theta_{s,t} \in \intcc{0, 1}$ such that $\sum_{s\in \tcgen{t}} \theta_{s,t} = 1$, we define\defnote{$\wgtm{s,t}$}
\begin{equation}
	\wgtm{s, t} := \begin{cases}
		              \theta_{s,t} & \text{if~} s \in \tcgen{t}\cr 0 & \text{otherwise.}
	              \end{cases}
	\label{eq:weight.stablemodel}
\end{equation}

The $\theta_{s,t}$ parameters in \cref{eq:weight.stablemodel} express the \emph{program's} lack of information about the weight assignment, when a single \acl{TC} entails more than one \acl{SM}.
We address this issue by assigning a possibly unknown parameter, \ie~a formal algebraic variable ($\theta_{s,t}$) associated with a \acl{TC} ($t$) and a \acl{SM} ($s$).
This allows the expression of a quantity that does not result from the program's syntax but can be determined or estimated given more information, \eg\ observed data.

As sets, two \aclp{SM} can have non-empty intersection.
But because different \acp{SM} represent different states of a system ---which are \emph{disjoint events}--- we assume that the algebra of the \aclp{SM} is $\sigma$-additive

\begin{assumption}[\Aclp{SM} as Disjoint events]
	\label{assumption:smodels.disjoint}%

	For any set $X$ of \aclp{SM} and any \acl{TC} $t$,
	\begin{equation}
		\wgtM\at{X, t} = \sum_{s\in X}\wgtM\at{s, t}.\label{eq:smodels.disjoint}
	\end{equation}

\end{assumption}

\Cref{eq:smodels.disjoint} is the basis for \cref{eq:weight.class.consistent} and effectively extends $\wgtM:\MODELset \times \TCHOICEset \to \mathbb{R}$ to $\wgtM:\powerset{\MODELset} \times \TCHOICEset \to \mathbb{R}$.
Now the pre-condition of \cref{eq:weight.stablemodel} can be stated as $\wgtM\at{\MODELset\at{t}, t} = 1$.

\ifExamples
	\begin{example}[\Aclp{SM} and parameters.]
		\label{ex:models.parameters}
		\em

		The program from \cref{ex:fruitful} has no information about the
		probabilities of the \aclp{SM} that result from the \acl{TC} $t = \set{a}$.
		These models are $\MODELset\at{\set{a}} = \set{ab, ac}$ so we need two
		parameters $\theta_{ab, \set{a}}, \theta_{ac, \set{a}} \in \intcc{0,1}$ and
		such that (\textit{cf.}\ \cref{eq:weight.stablemodel})
		\begin{equation*}
			\theta_{ab, \set{a}} + \theta_{ac, \set{a}} = 1.
		\end{equation*}

		If we let $\theta = \theta_{ab, \set{a}}$ then
		\begin{equation*}
			\theta_{ac, \set{a}} = 1 - \theta = \co{\theta}.
		\end{equation*}

		Also
		\begin{equation*}
			\begin{split}
				\theta_{ab, \set{\co{a}}} & = 0, %
				\\
				\theta_{ac, \set{\co{a}}} & = 0
			\end{split}
		\end{equation*}
		because $ab, ac \not\in\MODELset\at{\co{a}}$.
	\end{example}
\fi

\subsubsection*{Classes}
\label{par:prop.class.cases}

Consider the next step in sequence \cref{eq:sequence.functions}, the function $\wgtC$
on $\class{\EVENTSset}$.
Each class of the equivalence relation $\sim$ (eq.~\ref{eq:equiv.rel}) is either the inconsistent class ($\inconsistent$) or is associated with a \acl{SC}, \textit{i.e.~}a set of \aclp{SM}.
Therefore, $\wgtC$ is defined considering the following two cases:\defnote{$\wgtC$}
\paragraph{Inconsistent class.} This class contains events that are (classically) inconsistent, thus should never be observed and thus have weight zero:
\begin{equation}
	\wgtc{\inconsistent, t} := 0.
	\footnote{This weight being zero is independent of the \aclp{SM}.}
	\label{eq:weight.class.inconsistent}
\end{equation}
\paragraph{Consistent classes.} For the propagation function to be coherent, it must be constant within a class and its value dependent only on the \acl{SC}:
\begin{subequations}
	\begin{equation}
		\wgtc{\class{e}, t} := \wgtm{\stablecore{e}, t} = \sum_{s\in\stablecore{e}}\wgtm{s, t}.
		\label{eq:weight.class.consistent}
	\end{equation}
	and we further define the following:
	\begin{equation}
		\wgtc{\class{e}} := \sum_{t \in \TCHOICEset} \wgtt{t}\wgtc{\class{e}, t}
		\label{eq:weight.class.unconditional}
	\end{equation}
\end{subequations}
\Cref{eq:weight.class.consistent} states that the weight of a class $\class{e}$ is the weight of its \acl{SC}
($\stablecore{e}$) and \cref{eq:weight.class.unconditional}
\emph{averages} \cref{eq:weight.class.consistent} over the \aclp{TC}.
Notice that \cref{eq:weight.class.consistent} also applies to the independent class, $\indepclass = \set{ e \given \stablecore{e} = \emptyset}$, because events in this class are not related with any \acl{SM}:\defnote{$\indepclass$}
\begin{equation}
	\wgtc{\indepclass, t} = \sum_{s\in\emptyset}\wgtm{s, t} = 0.
	\label{eq:weight.class.independent}
\end{equation}

\ifExamples
	\begin{example}[Weight of \aclp{SM} and classes.]\label{ex:weights.sm}\em
		\begin{equation*}
			\begin{array}{c||l|ccc|ccc|c|c|r}
				  &
				A & B                                         & C      & D           & E & F & G & H & I & J
				\\[3pt]
				\hline
				\hline
				\multirow{3}{*}{\phantom{2em}}
				  & \multirow{3}{*}{\stablecore{e}}
				  & \multicolumn{3}{c|}{\wgtm{s,\set{\co{a}}}}
				  & \multicolumn{3}{c|}{\wgtm{s,\set{a}}}
				  & \wgtc{\class{e},\set{\co{a}}}
				  & \wgtc{\class{e},\set{a}}
				  & \multirow{3}{*}{\wgtc{\class{e}}}
				\\[2pt]
				  &
				  & \co{a}                                    & ab     & ac
				  & \co{a}                                    & ab     & ac
				  & \pwt{\set{\co{a}}}
				  & \pwt{\set{a}}
				  &
				\\[2pt]
				  &
				  & 1                                         & 0      & 0
				  & 0                                         & \theta & \co{\theta}
				  & 0.7
				  & 0.3
				  &
				\\[3pt]
				\hline
				1
				  & \co{a}
				  & 1                                         &        &
				  & 0                                         &        &
				  & 1
				  & 0
				  & 0.7
				\\[2pt]
				2
				  & ab
				  &                                           & 0      &
				  &                                           & \theta &
				  & 0
				  & \theta
				  & 0.3\theta
				\\[2pt]
				3
				  & ac
				  &                                           &        & 0
				  &                                           &        & \co{\theta}
				  & 0
				  & \co{\theta}
				  & 0.3\co{\theta}
				\\[2pt]
				4
				  & \co{a}, ab
				  & 1                                         & 0      &
				  & 0                                         & \theta &
				  & 1
				  & \theta
				  & 0.7 + 0.3\theta
				\\[2pt]
				5
				  & \co{a}, ac
				  & 1                                         &        & 0
				  & 0                                         &        & \co{\theta}
				  & 1
				  & \co{\theta}
				  & 0.7 + 0.3\co{\theta}
				\\[2pt]
				6
				  & ab, ac
				  &                                           & 0      & 0
				  &                                           & \theta & \co{\theta}
				  & 0
				  & \theta + \co{\theta} = 1
				  & 0.3
				\\[2pt]
				7
				  & \consequenceclass
				  & 1                                         & 0      & 0
				  & 0                                         & \theta & \co{\theta}
				  & 1
				  & \theta + \co{\theta} = 1
				  & 1
			\end{array}
		\end{equation*}

		Continuing \cref{ex:fruitful}, we show the propagation of $\pwT$ to $\pwM$
		(\cref{eq:weight.tchoice}) and then to $\wgtC$
		(\cref{eq:weight.class.consistent,eq:weight.class.unconditional}).
The
		table above resumes the calculations to compute $\wgtc{\class{e}}$ for each $e
			\in \EVENTSset$.
For example, $e = abc$ the calculation of $J6 =
			\wgtc{\class{abc}}$ follows these steps:
		\begin{enumerate}
			\item $\stablecore{abc} = \set{ab,ac}$ --- is in line $6$ of the table.
			\item Since $\TCHOICEset = \set{\set{a}, \set{\co{a}}}$, we need to calculate $I6 =
				      \wgtc{\class{abc}, \set{a}}$ and $H6 = \wgtc{\class{abc}, \set{\co{a}}}$.
By
			      \cref{eq:weight.class.consistent}:
			      \begin{equation*}
				      \begin{aligned}
					      H6 = \wgtc{\class{abc}, \set{\co{a}}}
					       & = \sum_{s \in \stablecore{abc}} \wgtm{s, \set{\co{a}}}
					      =
					       & \wgtm{ab, \set{\co{a}}} +  \wgtm{ac, \set{\co{a}}}      %
					      \\
					      I6 = \wgtc{\class{abc}, \set{a}}
					       & = \sum_{s \in \stablecore{abc}} \wgtm{s, \set{a}}
					      =
					       & \wgtm{ab, \set{a}} +  \wgtm{ac, \set{a}}
				      \end{aligned}
			      \end{equation*}
			\item The $\wgtm{s,t}$ above result from \cref{eq:weight.stablemodel} --- the
			      non-empty cells in columns $B:D$ and $E:G$:
			      \begin{equation*}
				      \begin{aligned}
					      C6
					       & = \wgtm{ab, \set{\co{a}}}
					       & = 0                      %
					      \\
					      D6
					       & = \wgtm{ac, \set{\co{a}}}
					       & = 0                      %
					      \\
					      F6
					       & = \wgtm{ab, \set{a}}
					       & = \theta                 %
					      \\
					      G6
					       & = \wgtm{ac, \set{a}}
					       & = \co{\theta}
				      \end{aligned}
			      \end{equation*}
			\item So we have --- columns $H, I$:
			      \begin{equation*}
				      \begin{aligned}
					      H6
					       & = \wgtc{\class{abc}, \set{\co{a}}}
					       & = 0 + 0
					       &
					       & = 0                               %
					      \\
					      I6
					       & = \wgtc{\class{abc}, \set{a}}
					       & = \theta + \co{\theta}
					       &
					       & = 1
				      \end{aligned}
			      \end{equation*}
			\item At last, by \cref{eq:weight.class.unconditional} --- columns $H, I$ and $J$:
			      \begin{equation*}
				      \begin{split}
					      J6 = \wgtc{\class{abc}}
					       & = \sum_{t\in\MODELset} \wgtc{\class{abc}, t}\pwt{t}                %
					      \\
					       & =  \wgtc{\class{abc}, \set{\co{a}}}\pwt{\set{\co{a}}} +
					      \wgtc{\class{abc}, \set{a}}\pwt{\set{a}}%
					      \\
					       & =  0 \co{\theta} +  1 \theta =  0 \times 0.7 +  1\times 0.3 = 0.3
				      \end{split}
			      \end{equation*}
		\end{enumerate}
	\end{example}
\fi

\subsubsection*{Events and Probability}
\label{par:propagation.event.cases}

Each consistent event $e \in \EVENTSset$ is in the class defined by its
\acl{SC} $\stablecore{e}$.
So, denoting the number of elements in $X$ as $\#
	X$, we set:\defnote{$\wgtE$}
\begin{subequations}
	\begin{equation}
		\wgte{e, t} :=
		\begin{cases}
			\frac{\wgtc{\class{e}, t}}{\# \class{e}} & \text{if~}\# \class{e} > 0, %
			\\
			0                                       & \text{otherwise}.
		\end{cases}
		\label{eq:weight.events}
	\end{equation}
	and, by averaging over the \aclp{TC}:
	\begin{equation}
		\wgte{e} := \sum_{t\in\TCHOICEset} \wgtt{t}\wgte{e, t}.
		\label{eq:weight.events.unconditional}
	\end{equation}
\end{subequations}

The \cref{eq:weight.events.unconditional} is the main goal of this paper: propagate the weights associated to facts of an \ac{WASP} to the set of all events of that program.
In order to get a probability from \cref{eq:weight.events.unconditional}, concerning the \emph{Probabilistic Tasks} goal, we define the \emph{normalizing factor}:
\begin{equation}
	Z :=
	\sum_{e \in \EVENTSset} \wgte{e} =
	\sum_{\class{e} \in \class{\EVENTSset}} \wgtc{\class{e}},\label{eq:normalizing.factor}
\end{equation}
and now \cref{eq:weight.events.unconditional} provides a straightforward way to define the \emph{probability of a single event $e \in \EVENTSset$}:\defnote{$\prE$}
\begin{equation}
	\prE\at{e} := \frac{\wgte{e}}{Z}.\label{eq:probability.event}
\end{equation}

\Cref{eq:probability.event} defines a coherent \emph{prior}\footnote{In the Bayesian sense that future observations might update this probability.} probability of events and, together with external statistical knowledge, can be used to learn about the \emph{initial} probabilities of the atoms.

\paragraph{The effect of propagation.} One way to assess the effect of propagating weights through the \acp{SM} to the events is to compare $\prE$ with a `probability' induced syntactically from the weights of facts. It is sufficient to consider what appens with the \aclp{TC}: if we syntactically induce a probability for the \acp{TC}, say $\prT$, is it true that $\prE\at{t} = \prT\at{t}$ for all $t \in \TCHOICEset$?

The weight assignment of the \aclp{TC} can also be normalized into a probability distribution. 
For $t\in\TCHOICEset$,\defnote{$\prT$}
\begin{equation}
	\prT\at{t} = \frac{\wgtt{t}}{\sum_{\tau \in \TCHOICEset} \wgtt{\tau}}
	\label{eq:syntactic.probability.totalchoices}
\end{equation}

And now we ask if these probabilities coincide in $\TCHOICEset$:

$$
\forall t \in \TCHOICEset \del{\prE\at{t} = \prT\at{t}}?
$$

It is easy to see that, in general, this cannot be true.
While the domain of $\prT$ is the set of \aclp{TC}, for $\prE$ the domain is much larger, including all the events.
Except for trivial programs, some events other than \aclp{TC} will have non-zero weight.

\begin{proposition}[Two Distributions] \label{prop:two.distributions} %

	If a program has a consistent event $e \in \CONSISTset\setminus\TCHOICEset$ such that $\prE\at{e} \not= 0$ then there is at least one $t\in\TCHOICEset$ such that
	\begin{equation}
		\prT\at{t} \not= \prE\at{t}. \label{eq:two.distributions}
	\end{equation}

\end{proposition}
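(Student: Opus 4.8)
The plan is to argue by contradiction, exploiting that $\prE$ and $\prT$ are each \emph{normalized} probability distributions, but over different-sized domains: $\prT$ lives on $\TCHOICEset$, whereas $\prE$ lives on the much larger set $\EVENTSset$, and crucially $\TCHOICEset \subseteq \CONSISTset \subseteq \EVENTSset$ (a total choice picks exactly one of $a,\co{a}$ per weighted fact, so it is a consistent event). The hypothesis hands us a single consistent event $e_0 \in \CONSISTset\setminus\TCHOICEset$ with $\prE\at{e_0}\neq 0$, i.e.\ some probability mass that $\prE$ places \emph{outside} $\TCHOICEset$. The intuition is that if $\prE$ agreed with $\prT$ on all of $\TCHOICEset$, then $\prE$ would already spend all its unit mass on $\TCHOICEset$, leaving nothing for $e_0$ — a contradiction.

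Concretely, first I would record the two normalization facts. By \cref{eq:normalizing.factor} and \cref{eq:probability.event} we have $\sum_{e\in\EVENTSset}\prE\at{e}=1$ (this also forces $Z\neq 0$, which is guaranteed here since $\prE\at{e_0}\neq 0$ requires a nonzero denominator). By \cref{eq:syntactic.probability.totalchoices} we have $\sum_{t\in\TCHOICEset}\prT\at{t}=1$; in fact the normalizing sum there equals $1$ already, since $\sum_{t\in\TCHOICEset}\wgtT\at{t}=\prod_{\weightfact{a}{w}\in\WEIGHTFset}\del{w+\co{w}}=1$ by \cref{eq:weight.total.choice}. Next I would note nonnegativity: tracing the chain $\wgtT\to\wgtM\to\wgtC\to\wgtE$, every ingredient is a product or sum of quantities in $\intcc{0,1}$ (the weights $w,\co{w}$ and the parameters $\theta_{s,t}\in\intcc{0,1}$), so $\wgte{e}\geq 0$ and hence $\prE\at{e}\geq 0$ for every event $e$, for any admissible assignment of the parameters. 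Then, assuming for contradiction that $\prE\at{t}=\prT\at{t}$ for all $t\in\TCHOICEset$, summing over $\TCHOICEset$ gives $\sum_{t\in\TCHOICEset}\prE\at{t}=\sum_{t\in\TCHOICEset}\prT\at{t}=1$. Splitting the full sum as $1=\sum_{e\in\EVENTSset}\prE\at{e}=\sum_{t\in\TCHOICEset}\prE\at{t}+\sum_{e\in\EVENTSset\setminus\TCHOICEset}\prE\at{e}$ forces $\sum_{e\in\EVENTSset\setminus\TCHOICEset}\prE\at{e}=0$. By nonnegativity each summand vanishes, so in particular $\prE\at{e_0}=0$, contradicting the hypothesis.

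I do not expect a deep obstacle here; the argument is essentially a counting/mass-conservation observation. The two points requiring care are (i) verifying cleanly that total choices are indeed consistent events, so that $\TCHOICEset$ sits inside the domain of $\prE$ and the set $\EVENTSset\setminus\TCHOICEset$ makes sense, and (ii) making the nonnegativity claim fully rigorous through the definitions \crefrange{eq:weight.stablemodel}{eq:weight.events.unconditional}, since the weights are polynomials in the formal variables $\theta_{s,t}$; the safe reading is that the conclusion holds pointwise for every parameter assignment with $\theta_{s,t}\in\intcc{0,1}$ and $\sum_{s\in\tcgen{t}}\theta_{s,t}=1$, and in particular for any assignment witnessing $\prE\at{e_0}\neq 0$. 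With those two bookkeeping points settled, the contradiction is immediate.
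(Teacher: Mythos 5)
Your proposal is correct and follows essentially the same route as the paper's own proof: assume agreement on $\TCHOICEset$, use that both $\prT$ and $\prE$ are normalized to total mass $1$, conclude that $\prE$ places zero mass outside $\TCHOICEset$, and contradict the hypothesis on $e$. Your version merely makes explicit two bookkeeping points the paper leaves implicit (that total choices are consistent events and that $\prE$ is nonnegative for admissible parameter values), which is a reasonable tightening but not a different argument.
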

\begin{proof}
	Suppose towards a contradiction that $\prT\at{t} = \prE\at{t}$ for
	all $t \in \TCHOICEset$ and $\eta$ as above.

	Then
	\begin{equation*}
		1 = \sum_{t \in \TCHOICEset} \prT\at{t}  = \sum_{t \in \TCHOICEset} \prE\at{t}.
	\end{equation*}

	Therefore, $\prE\at{\eta} = 0$ for all $\eta \in \CONSISTset\setminus\TCHOICEset$, in contradiction with the hypothesis on $e$.
	\hfill
\end{proof}

The essential, perhaps \emph{counter-intuitive}, conclusion of \cref{prop:two.distributions} is that we are dealing with \emph{two distributions}: $\prT$, restricted to the \aclp{TC}, results \emph{syntactically} from the annotations of the program, while $\prE$, extended to events, results from both the annotations and the program's \emph{semantics}, \ie\ the \aclp{SM}.

For \cref{ex:fruitful}:
\begin{equation*}
	\begin{aligned}
		\prT\at{a} & = 0.3          &  &
		\text{from the program } \SBF, %
		\\
		\prE\at{a} & = \frac{3}{64} &  &
		\text{from \cref{eq:sbf.prior}}.
	\end{aligned}
\end{equation*}

\ifExamples
	\begin{example}[Coherent probability of events.]\label{ex:choerent.probability}\em

		In \cref{ex:weights.sm} we determined $\wgtc{\class{e}, t}$ from
		\cref{eq:weight.class.consistent} and also $\wgtc{\class{e}}$, the weight of
		each class, using \cref{eq:weight.class.unconditional}, that marginalizes
		the \aclp{TC}.
		\begin{equation*}
			\begin{array}{l|cc|c|c}
				\stablecore{e}
				       & \hspace{1em}\wgtC\hspace{1em}
				       & \hspace{1em}\#\class{e}\hspace{1em}
				       & \hspace{1em}\pwE\hspace{1em}
				       & \hspace{1em}\wgtE\hspace{1em}
				\\
				\hline
				\inconsistent
				       & 0
				       & 37
				       & 0
				       & 0
				\\[4pt]
				\indepclass
				       & 0
				       & 9
				       & 0
				       & 0
				\\[4pt]
				\co{a}
				       & \frac{7}{10}
				       & 9
				       & \frac{7}{90}
				       & \frac{7}{207}
				\\[4pt]
				ab     & \frac{3}{10}\theta                  & 3 & \frac{1}{10}\theta      & \frac{1}{23}\theta \\[4pt]
				ac     & \frac{3}{10}\co{\theta}             & 3 & \frac{1}{10}\co{\theta} &
				\frac{1}{23}\co{\theta}                                                                         \\[4pt]
				\co{a}, ab
				       & \frac{7 + 3\theta}{10}
				       & 0
				       & 0
				       & 0
				\\[4pt]
				\co{a}, ac
				       & \frac{7 + 3\co{\theta}}{10}
				       & 0
				       & 0
				       & 0
				\\[4pt]
				ab, ac & \frac{3}{10}                        & 2 & \frac{3}{20}            & \frac{3}{46}       \\[4pt]
				\consequenceclass
				       & 1
				       & 1
				       & 1
				       & \frac{10}{23}
				\\[4pt]
				\hline
				       &
				       &
				       &
				       &
				\\[-0.5em]
				       & Z = \frac{23}{10}
				       &
				       & \frac{\wgtC}{\#\class{e}}
				       & \frac{\pwE}{X}
			\end{array}
		\end{equation*}

		From there we can follow \cref{eq:weight.events} to calculate the weight
		$\wgte{e, t}$ of each event given $t$, by simply dividing $\wgtc{\class{e}, t}$
		by $\#\class{e}$, the total number of elements in $\class{e}$.
		Then we marginalize $t$ in $\wgte{e, t}$ to get $\wgte{e}$.
		Finally, the normalization factor from \cref{eq:normalizing.factor} and \cref{eq:weight.event} provide a coherent \emph{prior} weight for each event.

		In summary, the coherent \emph{prior} weight of events of program
		\cref{eq:fruitful} is
		\begin{equation}
			\begin{array}{l|ccccccccc}
				\stablecore{e}          &
				\inconsistent           &
				\indepclass             &
				\co{a}                  &
				ab                      &
				ac                      &
				\co{a}, ab              &
				\co{a}, ac              &
				ab, ac                  &
				\consequenceclass
				\\ \hline
				\wgtE\at{e}              &
				0                       &
				0                       &
				\frac{7}{207}           &
				\frac{1}{23}\theta      &
				\frac{1}{23}\co{\theta} &
				0                       &
				0                       &
				\frac{3}{46}            &
				\frac{10}{23}
			\end{array}
		\end{equation}
		\label{eq:sbf.prior}

	\end{example}
\fi

Now $\prE:\EVENTSset \to \intcc{0,1}$ can be extended to
$\prE:\powerset{\EVENTSset}\to\intcc{0,1}$ by abusing notation and setting, for $X \subseteq \EVENTSset$,
\begin{equation}
	\prE\at{X} = \sum_{x\in X}\prE\at{x}.
	\label{eq:prob.event.set}
\end{equation}
It is straightforward to verify that the latter satisfies the Kolmogorov axioms of probability.

We can now properly state the following property about \emph{certain facts}
such as $\weightfact{a}{1.0}$.
\begin{theorem}[Weight of Certain Facts]
	\label{teo:prob.one}

	Consider a program $A$ with the \acl{WF} $\weightfact{\alpha}{1.0}$ and
	$B$ that results from $A$ by replacing that fact by the deterministic fact $\alpha$. Then
	\begin{equation}
		\forall e \in \EVENTSset \left(\wgtE^A\at{e} = \wgtE^B\at{e}\right).
	\end{equation}
\end{theorem}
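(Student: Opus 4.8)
The plan is to walk the propagation pipeline $\wgtT \to \wgtM \to \wgtC \to \wgtE$ and show that at each stage the $A$- and $B$-quantities agree on every event, the whole argument being powered by the single arithmetic fact that for $\weightfact{\alpha}{1.0}$ the complementary weight is $\co{1.0} = 0$. First I would treat the total choices. Writing $W = \WATOMset\at{A} \setminus \set{\alpha}$ for the remaining weighted atoms, $\alpha$ is no longer a weighted atom of $B$, so $\TCHOICEset\at{B} = \TCHOICEset\at{W}$, whereas \cref{eq:def.total.choice} splits $\TCHOICEset\at{A}$ into an $\alpha$-branch $\set{t \cup \set{\alpha} : t \in \TCHOICEset\at{W}}$ and a $\co{\alpha}$-branch $\set{t \cup \set{\co{\alpha}} : t \in \TCHOICEset\at{W}}$. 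The product \cref{eq:weight.total.choice} then gives $\wgtT^A\at{t \cup \set{\alpha}} = \wgtT^B\at{t}$ and $\wgtT^A\at{t \cup \set{\co{\alpha}}} = \co{1.0}\,\wgtT^B\at{t} = 0$, so the $\alpha$-branch reproduces the weight distribution of $B$ exactly while the $\co{\alpha}$-branch carries none.

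Next I would match the semantic layer and push the agreement forward. Fixing $\alpha$ true in the reduct of the derived program of $A$ should yield the same minimal models as treating $\alpha$ as a plain fact, so $\MODELset^A\at{t \cup \set{\alpha}} = \MODELset^B\at{t}$ for each $t \in \TCHOICEset\at{W}$, and under this identification the parameters $\theta_{s,\,t\cup\set{\alpha}}$ of \cref{eq:weight.stablemodel} for $A$ coincide with the parameters $\theta_{s,t}$ for $B$. Feeding this into \cref{eq:weight.class.consistent} and summing over total choices via \cref{eq:weight.class.unconditional}, every $\co{\alpha}$-term is multiplied by a zero total-choice weight and drops out, so $\wgtC^A\at{\class{e}}$ reduces to the $B$-expression assembled from the stable core $\stablecore{e}$. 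Dividing by $\#\class{e}$ and averaging as in \cref{eq:weight.events,eq:weight.events.unconditional} would then deliver $\wgtE^A\at{e} = \wgtE^B\at{e}$ for every $e$.

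The step I expect to be the real obstacle is the denominator $\#\class{e}$ in \cref{eq:weight.events}. Because the derived program of $A$ retains the disjunction $\alpha \disj \co{\alpha}$, the model set $\MODELset^A$ generally contains extra stable models satisfying $\co{\alpha}$ that $\MODELset^B$ lacks. These extra models are harmless in the numerators — they are entailed only by the weight-zero $\co{\alpha}$-choices and so contribute nothing to any $\wgtC^A\at{\class{e}, t}$ — but the equivalence relation of \cref{def:equiv.rel}, and hence the stable cores $\stablecore{e}$ and the cardinalities $\#\class{e}$, are read off the \emph{full} model set, and enlarging that set can refine the partition of $\EVENTSset$. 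The equality therefore hinges on a comparison lemma for the two partitions induced by $\MODELset^A$ and $\MODELset^B$: one must show that the refinement caused by the extra $\co{\alpha}$-models leaves $\#\class{e}$ unchanged on every class of positive weight (equivalently, that it only ever subdivides classes that already weigh zero). I would isolate and prove this partition lemma first, and I expect essentially all the difficulty of the theorem to be concentrated there; the total-choice and stable-model bookkeeping above is routine by comparison.
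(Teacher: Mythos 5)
Your handling of the first three stages of the pipeline is essentially the paper's own proof: the same split of $\TCHOICEset\at{A}$ into an $\alpha$-branch that reproduces $\wgtT^B$ and a $\co{\alpha}$-branch annihilated by $\co{1.0}=0$, the same identification of the $\alpha$-containing stable models of $A$ with those of $B$, and the same cancellation inside \cref{eq:weight.class.unconditional}. Where you depart from the paper is the final step, and your instinct there is exactly right: \cref{eq:weight.events} divides by $\#\class{e}$, the classes and stable cores are computed from the \emph{full} stable-model set, and $\MODELset\at{A}$ contains extra $\co{\alpha}$-models that $\MODELset\at{B}$ lacks, so the partitions of $\EVENTSset$ induced by the two programs need not coincide. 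The paper's proof never confronts this: it writes $\stablecore{e}$ and $\#\class{e}$ without indicating which program they are computed in, which is precisely the identification you flag as the ``real obstacle''.

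The gap, however, remains: your proposal reduces the theorem to the unproven ``partition lemma'' (that the refinement caused by the $\co{\alpha}$-models only subdivides zero-weight classes, leaving $\#\class{e}$ unchanged where it matters), and that lemma is false. Take $A = \set{\weightfact{\alpha}{1.0},\ \weightfact{b}{0.5}}$ and $B = \set{\alpha,\ \weightfact{b}{0.5}}$. Then $\MODELset\at{A} = \set{\alpha b,\ \alpha\co{b},\ \co{\alpha}b,\ \co{\alpha}\co{b}}$ while $\MODELset\at{B} = \set{\alpha b,\ \alpha\co{b}}$. For the event $e = \set{b}$ one finds $\stablecore{e} = \set{\alpha b, \co{\alpha}b}$ and $\class{e} = \set{b}$ in $A$, but $\stablecore{e} = \set{\alpha b}$ and $\class{e} = \set{b, \alpha b}$ in $B$: the class has positive weight and is nevertheless subdivided. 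The numerators agree ($\wgtC^A\at{\class{e}} = \wgtC^B\at{\class{e}} = 0.5$, exactly as your cancellation argument predicts), but the denominators are $1$ and $2$, giving $\wgtE^A\at{b} = 0.5 \ne 0.25 = \wgtE^B\at{b}$. So the lemma you would need cannot be proved, and the theorem as stated fails; repairing it requires either an extra hypothesis controlling the stable models entailed by the $\co{\alpha}$-choices, or a definition of $\wgtE$ whose normalizing cardinality does not depend on the model set in this way. In short: you located the genuine weak point that the paper's proof glosses over, but your proposed route to close it does not go through.
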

\begin{proof}
	First, remark that any model of $B$ includes $\alpha$. Let us denote by $\wgtT^{A}, \wgtM^{A}, \wgtC^{A}$ and $\wgtE^{A}$ the weight functions for $A$, and likewise for $B$.
	
	Recall that the \acl{WF} $\weightfact{\alpha}{1.0}$ in $A$ is replaced by $\alpha \vee \co{\alpha}$ in the derived program $A'$. Therefore, some models, including the \aclp{SM}, of $A'$ include $\alpha$ while others include $\co{\alpha}$. On the other hand, any model of $B$ includes $\alpha$.
	The main idea of this proof is that for a \acl{TC} $t\in\TCHOICEset\at{A}$, \cref{eq:def.total.choice} entails that either $t = t_B \cup \set{\alpha}$ or $t = t_B \cup \set{\co{\alpha}}$ for some $t_B \in \TCHOICEset\at{B}$.
	

	\bigskip
	If $t = t_B \cup \set{\alpha}$ then
	$$
	\begin{aligned}
		\wgtT^A\at{t} &= \wgtT^A\at{t_B \cup \set{\alpha}} 
				& \text{[\cref{eq:def.total.choice}]} \\
			&= \prod_{\substack{
						\weightfact{a}{w}~ \in ~\WEIGHTFset\at{A} \\
						a~ \in~ t_B \vee a = \alpha}}
					w
				\quad \times\quad
				\prod_{\substack{
						\weightfact{a}{w}~ \in ~\WEIGHTFset\at{A}\\
						\co{a}~ \in~ t_B \vee \co{a} = \alpha}} 
					\co{w}		 
				& \text{[\cref{eq:weight.total.choice}]} \\
			&= \prod_{\substack{
						\weightfact{a}{w}~ \in ~\WEIGHTFset\at{A} \\
						a~ \in~ t_B}}
					w
				\quad \times\quad
				\prod_{\substack{
						\weightfact{a}{w}~ \in ~\WEIGHTFset\at{A}\\
						\co{a}~ \in~ t_B}} 
					\co{w} 
				& \text{[if}~ a = \alpha~\text{then}~w = 1.0 ~\text{and}~\forall a \del{\co{a} \not= \alpha}\text{]} \\
			&= \prod_{\substack{
						\weightfact{a}{w}~ \in ~\WEIGHTFset\at{B}, \\
						a~ \in~ t_B}}
					w
				\quad \times\quad
				\prod_{\substack{
						\weightfact{a}{w}~ \in ~\WEIGHTFset\at{B}\\
						\co{a}~ \in~ t_B}} 
					\co{w} 
				& {\text{[because}~\TCHOICEset\at{A} = \TCHOICEset\at{B \cup \set{\alpha}}\text{]}}\\
			&= \wgtT^B\at{t_B}. 
	\end{aligned}
	$$ 

	If $t = t_B \cup \set{\co{\alpha}}$ then 
	$$
	\begin{aligned}
		\wgtT^A\at{t} &= \wgtT^A\at{t_B \cup \set{\co{\alpha}}} 
				&& \text{[\cref{eq:def.total.choice}]} \\
			&= \prod_{\substack{
						\weightfact{a}{w}~ \in ~\WEIGHTFset\at{A} \\
						a~ \in~ t_B \vee a = \co{\alpha}}}
					w
				\quad \times\quad
				\prod_{\substack{
						\weightfact{a}{w}~ \in ~\WEIGHTFset\at{A}\\
						\co{a}~ \in~ t_B \vee \co{a} = \co{\alpha}}} 
					\co{w}		 
				&& \text{[\cref{eq:weight.total.choice}]} 
	\end{aligned}
	$$ 
	The last product of the formula is equal to $0$ because $\weightfact{\alpha}{1.0}$ entails $\co{w} = 0$ when $\co{a} = \co{\alpha}$.

	Therefore 
	$$
	\wgtT^A\at{t} = \begin{cases}
		\wgtT^B\at{t_B} &\text{if}~t = t_B \cup \set{\alpha}, \\
		0 &\text{if}~t = t_B \cup \set{\co{\alpha}} \\
	\end{cases} 
	$$
	for some $t_B\in \TCHOICEset\at{B}$.
	
	Also, a \acl{SM} $s \in \MODELset\at{A}$ either contains $\alpha$ or $\co{\alpha}$. In the former case, it is also a \ac{SM} of $B$. Therefore
	$$
	\wgtT^A\at{t}\wgtM^A\at{s,t} =
	\begin{cases}
		 \wgtT^B\at{t_B}\wgtM^B\at{s,t_B} 
			&\text{if}~t = t_B \cup \set{\alpha}~\text{and}~\alpha \in s, \\
		0 &\text{if}~t = t_B \cup \set{\co{\alpha}}~\text{or}~\co{\alpha}\in s
	\end{cases}	
	$$
	because, if $\co{\alpha} \in s$ and $t = t_B \cup \alpha$ then $s \not\in \MODELset^A\at{t}$ so $\wgtM^A\at{s,t} = 0$.

	\bigskip
	Concerning the relation between $\wgtC^A$ and $\wgtC^B$.
	Let $e\in \EVENTSset\at{A}$. Note that $\EVENTSset\at{A} = \EVENTSset\at{B}$ and then:
	$$
	\begin{aligned}
		\wgtC^A\at{\class{e}} 
		&= \sum_{t \in \TCHOICEset\at{A}} \del{\wgtT^A\at{t}\wgtC^A\at{\class{e}, t}} 
			& \text{[\cref{eq:weight.class.unconditional}]} \\
		&= \sum_{t \in \TCHOICEset\at{A}} \del{ 
			\wgtT^A\at{t}
			\sum_{s\in\stablecore{e}}\wgtM^A\at{s, t} } 
			& \text{[\cref{eq:weight.class.consistent}]} \\
		&= \sum_{t \in \TCHOICEset\at{A}} \del{ 
			\sum_{s\in\stablecore{e}}\wgtT^A\at{t}\wgtM^A\at{s, t} } \\
		&= \sum_{t_B \in \TCHOICEset\at{B}} \del{ 
			\sum_{s\in\stablecore{e},~ \alpha \in s}\wgtT^B\at{t_B}\wgtM^B\at{s, t_B} }
			& \text{[from}~\wgtT^A\at{t}\wgtM^A\at{s,t}~\text{above]} \\
		&= \sum_{t_B \in \TCHOICEset\at{B}} \del{ 
			\wgtT^B\at{t_B}\sum_{s\in\stablecore{e},~ \alpha \in s}\wgtM^B\at{s, t_B} }
			\\
		&= \sum_{t_B \in \TCHOICEset\at{B}} \del{ 
			\wgtT^B\at{t_B} \wgtC^B\at{\class{e}, t} }
			& \text{[\cref{eq:weight.class.consistent}]} \\
		&= \wgtC^B\at{\class{e}}
			& \text{[\cref{eq:weight.class.unconditional}]}
	\end{aligned}
	$$

	\bigskip
	Concerning the relation between $\wgtE^A$ and $\wgtE^B$. Let $e\in\EVENTSset$ such that $\#\class{e} > 0$. Then:

	$$
	\begin{aligned}
		\wgtE^A\at{e} &= \sum_{t \in \TCHOICEset\at{A}} \wgtT^A\at{t}\frac{\wgtC^A\at{\class{e}, t}}{\# \class{e}}
			& \text{[\cref{eq:weight.events,eq:weight.events.unconditional}]}
			\\
		&= \frac{1}{\# \class{e}}\sum_{t \in \TCHOICEset\at{A}}
			\wgtT^A\at{t}\wgtC^A\at{\class{e}, t}
			&
			\\
		&= \frac{1}{\# \class{e}} \wgtC^A\at{\class{e}}
			&\text{[\cref{eq:weight.class.unconditional}]}
			\\
		&= \frac{1}{\# \class{e}} \wgtC^B\at{\class{e}}
			&\text{[from}~\wgtC^A=\wgtC^B~\text{above]}
			\\
		&= \frac{1}{\# \class{e}} \sum_{t \in \TCHOICEset\at{B}}
		\wgtT^B\at{t}\wgtC^B\at{\class{e}, t}
			&\text{[\cref{eq:weight.class.unconditional}]}
			\\
		&= \sum_{t \in \TCHOICEset\at{B}} \wgtT^B\at{t}\frac{\wgtC^B\at{\class{e}, t}}{\# \class{e}}
			& 
			\\		
		&= \wgtE^B\at{e}
			& \text{[\cref{eq:weight.events,eq:weight.events.unconditional}]}
	\end{aligned}
	$$
\hfill
\end{proof}

\begin{corollary}[Probability of Certain Facts]
	\label{cor:prob.one}

	Consider a program $A$ with the \acl{WF} $\weightfact{\alpha}{1.0}$ and
	$B$ that results from $A$ by replacing that fact by the deterministic fact $\alpha$. Let $\prE^A$ given by \cref{eq:probability.event} for $A$ and $\prE^B$ for $B$.
	Then
	\begin{equation}
		\forall e \in \EVENTSset \left(\prE^A\at{e} = \prE^B\at{e}\right).
	\end{equation}
\end{corollary}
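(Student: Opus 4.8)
The plan is to obtain \cref{cor:prob.one} as an essentially immediate consequence of \cref{teo:prob.one} together with the definition of the event probability. Recall from \cref{eq:probability.event,eq:normalizing.factor} that $\prE\at{e} = \wgtE\at{e}/Z$ with normalizing factor $Z = \sum_{e\in\EVENTSset}\wgtE\at{e}$. Thus the claim splits into two ingredients: first, that the (unnormalized) event weights of $A$ and $B$ agree, which is exactly the content of \cref{teo:prob.one}; and second, that the two normalizing factors $Z^A$ and $Z^B$ coincide. Once both are in hand, the conclusion follows by dividing equal numerators by equal denominators.

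First I would fix the common domain of summation. Replacing the \acl{WF} $\weightfact{\alpha}{1.0}$ in $A$ by the deterministic fact $\alpha$ in $B$ does not alter the set $\ATOMSset$ of atoms, so $\EVENTSset\at{A} = \powerset{\ATOMSset} = \EVENTSset\at{B}$, a fact already noted in the proof of \cref{teo:prob.one}. Consequently the sum defining $Z$ ranges over the \emph{same} index set for both programs. By \cref{teo:prob.one} the summands agree term by term, $\wgtE^A\at{e} = \wgtE^B\at{e}$ for every $e\in\EVENTSset$, so summing over $\EVENTSset$ yields
\[
Z^A = \sum_{e\in\EVENTSset}\wgtE^A\at{e} = \sum_{e\in\EVENTSset}\wgtE^B\at{e} = Z^B.
\]

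With the numerators equal by \cref{teo:prob.one} and the denominators equal by the previous step, I would conclude, for each $e\in\EVENTSset$,
\[
\prE^A\at{e} = \frac{\wgtE^A\at{e}}{Z^A} = \frac{\wgtE^B\at{e}}{Z^B} = \prE^B\at{e},
\]
which is the desired identity. There is no real obstacle here; the only point requiring a word of care is the well-definedness of the normalization, namely that $Z^A = Z^B \neq 0$ so the quotients make sense. This holds whenever the program admits at least one consistent class of positive weight (for instance the consequence class $\consequenceclass$), which is the standing assumption under which $\prE$ is introduced in \cref{eq:probability.event}. Hence the corollary reduces entirely to \cref{teo:prob.one} plus the observation that normalization preserves the equality of unnormalized weights.
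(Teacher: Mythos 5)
Your proposal is correct and follows exactly the paper's own argument: invoke \cref{teo:prob.one} for the equality of the event weights, deduce $Z^A = Z^B$ from \cref{eq:normalizing.factor}, and conclude via \cref{eq:probability.event}. The only difference is that you spell out the common index set $\EVENTSset\at{A}=\EVENTSset\at{B}$ and the nonvanishing of $Z$, which the paper leaves implicit.
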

\begin{proof}	
	By \cref{teo:prob.one}, $\wgtE^A = \wgtE^{B}$ implies $Z^A = Z^{B}$, by \cref{eq:normalizing.factor}, and then $\prE^A = \prE^{B}$ by \cref{eq:probability.event}. 
\end{proof}

\begin{example}[Probability of Events]
	\label{ex:prob.events}
	\em
	The coherent \emph{prior} probability of events of program \SBF\ in \cref{ex:fruitful} is
	\begin{equation}
		\begin{array}{l|ccccccccc}
			\stablecore{e}          &
			\inconsistent           &
			\indepclass             &
			\co{a}                  &
			ab                      &
			ac                      &
			\co{a}, ab              &
			\co{a}, ac              &
			ab, ac                  &
			\consequenceclass
			\\ \hline
			\prE\at{e}              &
			0                       &
			0                       &
			\frac{7}{207}           &
			\frac{1}{23}\theta      &
			\frac{1}{23}\co{\theta} &
			0                       &
			0                       &
			\frac{3}{46}            &
			\frac{10}{23}
		\end{array}
		\label{eq:sbf.prior}
	\end{equation}

	This table can be used to compute the probability of any single event $e \in \EVENTSset$ by looking at the column of the event's \acl{SC}.

	For example:
	\begin{list}{\hspace{2em}}{\setlength\itemsep{0.5em}}
		\item $\wgtE\at{ab} = \frac{\theta}{23}, $ because $ab$ is the only \ac{SM} related with $ab$ so $\stablecore{ab} = \set{ab}$ and the weight value is found in the respective column of \cref{eq:sbf.prior}.

		\item $\wgtE\at{abc} = \frac{3}{46}$ because $abc \supset  ab$ and $abc \supset ac$.
		So $\stablecore{abc} = \set{ab, ac}$.

		\item $\wgtE\at{bc} = 0$ because, since there is no \ac{SM} $s$ that either $s \subset bc$ or $bc \subset s$, $\stablecore{bc} = \emptyset$ \emph{i.e.}\ $bc \in \indepclass$.

		\item $\wgtE\at{\co{a}b} = \frac{7}{207}$ because $\stablecore{\co{a}b} = \set{\co{a}}$.

		\item $\wgtE\at{\co{a}} = \frac{7}{207}$ and  $\wgtE\at{a} = \frac{3}{46}$.
		Notice that $\wgtE\at{\co{a}} + \wgtE\at{a} \not= 1$. 
		This highlights the fundamental difference between $\wgtE$ and $\wgtT$ (\emph{cf.~}\cref{prop:two.distributions}), where the former results from the lattice of the \aclp{SC} and the latter directly from the explicit assignment of probabilities to literals.
	\end{list}

	Related with the last case above, consider the complement of a consistent event $e$, denoted by $\complement e$.\defnote{$\complement e$}
	To calculate $\prE\at{\complement e}$ we look for the classes in $\class{\EVENTSset}$ that are not $\class{e}$, \ie~the complement of $e$'s class within $\class{\EVENTSset}$\footnote{All the usual set operations hold on the complement.
	For example, $\complement\complement X = X$.}, $\complement\class{e}$.
	Considering that $\class{\EVENTSset}$ is in a one-to-one correspondence with the \aclp{SC} plus $\inconsistent$, 
	\begin{equation*}
		\class{\EVENTSset} \simeq \set{
			\inconsistent, \indepclass, \set{\co{a}}
			, \set{ab}, \set{ac}, \set{\co{a}, ab}
			, \set{\co{a}, ac}, \set{ab, ac}, \consequenceclass}.
	\end{equation*}
	In particular, for $\wgtE\at{\complement a}$, since $\stablecore{a} = \set{ab, ac}$ then $\complement \class{a} = \class{\EVENTSset} \setminus \class{a}$ and
	\(
	\prE\at{\complement a} =  \prE\at{\class{\EVENTSset} \setminus \class{a}} = 1 - \prE\at{a}
	\).
	Also, $\prE\at{\complement \co{a}} =  1 - \prE\at{\co{a}} $.
\end{example}

While not illustrated in our examples, this method also applies to programs that have more than one \acl{PF}, like
\begin{equation*}
	\left\{
	\begin{aligned}
		\weightfact{a & }{0.3},             %
		&&
		\weightfact{b & }{0.6},             %
		\\
		c \vee d    & \clause a \wedge b.
	\end{aligned}
	\right.
\end{equation*}

Our approach generalizes to Bayesian networks in a way similar to
\cite{cozman2020joy,raedt2016statistical} and
\cite{kiessling1992database,thone1997increased} as follows.
On the one hand, any acyclic propositional program can be viewed as the specification of a Bayesian network over binary random variables.
So, we may take the structure of the Bayesian network to be the dependency graph.
The random variables then correspond to the atoms and the probabilities can be read off of the probabilistic facts and rules.
Conversely, any Bayesian network over binary variables can be specified by an acyclic non-disjunctive \ac{WASP}.

\section{Discussion and Future Work}
\label{sec:discussion}

This work is a first venture into expressing weight assignments using algebraic expressions derived from a logical program, in particular an \ac{ASP}.
We would like to point out that there is still much to explore concerning the full expressive power of logic programs and \ac{ASP} programs.
So far, we have not considered recursion, logical variables or functional symbols.

The theory, methodology, and tools, from Bayesian Networks can be adapted to our approach.
The connection with Markov Fields \cite{kindermann80} is left for future work.
An example of a `program selection' application (as mentioned in \cref{item:program.selection}, \cref{ssec:propagating.weights}) is left for future work.
Also, there is still little effort concerning the \emph{Probabilistic Tasks} and to articulate with the related fields of probabilistic logical programming, machine learning, inductive programming, \emph{etc.}

The equivalence relation from \cref{def:equiv.rel} identifies the $s \subseteq e$ and $e \subseteq s$ cases.
Relations that distinguish such cases might enable better relations between the representations and processes from the \aclp{SM}.

Furthermore, we decided to set the weight of inconsistent events to $0$ but, maybe, in some cases, we shouldn't.
For example, since observations may be affected by noise, one can expect to occur some inconsistencies.

\section*{Acknowledgements}

This work is partly supported by Funda\c{c}\~ao para a Ci\^{e}ncia e Tecnologia (FCT/IP) under contracts UIDB/04516/2020 (NOVA LINCS), UIDP/04674/2020 and UIDB/04675/2020 (CIMA).

The authors are grateful to Lígia Henriques-Rodrigues, Matthias Knorr and Ricardo Gonçalves for valuable comments on a preliminary version of this paper, and Alice Martins for contributions on software development.

\bibliographystyle{plain}
\bibliography{zugzwang}

\end{document}